\def\probleme#1#2#3{%
\begin{tcolorbox}[enhanced,
  attach boxed title to top left={xshift=2mm,yshift=-3mm,yshifttext=-2mm},
  colback=gray!20!white, colframe=gray!70!black, colbacktitle=gray!70!black,
  boxrule=0.2mm, boxed title style={size=small}, title={#1}]
 \begin{description}[noitemsep,font=\textsf]
  \item[Instance:] {#2}
  \item[Question:] {#3}
 \end{description}
\end{tcolorbox}
}
\newtheorem{theorem}{Theorem}
\newtheorem{lemma}[theorem]{Lemma}
\newtheorem{corollary}[theorem]{Corollary}
\newtheorem{observation}[theorem]{Observation}
\numberwithin{subcase}{case}
\newcommand{\haruka}[1]{#1}
\newcommand{\set}[1]{\{#1\}}
\newcommand{\neig}[2]{N_{#1}(#2)} 
\newcommand{\cneig}[2]{N_{#1}[#2]} 
\newcommand{\ini}{{\sf 0}}
\newcommand{\desire}{{\sf t}}
\newcommand{\TAR}{\textsf{TAR}}
\newcommand{\onestep}{\leftrightarrow} 
\newcommand{\sevstep}{\leftrightsquigarrow} 
\newcommand{\sevstepk}[1]{\overset{#1}{\sevstep}} 
\newcommand{\thr}{k} 
\newcommand{\sol}{s} 
\newcommand{\dom}{D} 
\newcommand{\domcore}{C} 
\newcommand{\dege}{d} 
\newcommand{\pw}{pw} 
\newcommand{\mdeg}{\Delta} 
\newcommand{\vc}{\tau} 
\newcommand{\rem}{v_{r}} 
\newcommand{\lea}{v_{l}} 
\newcommand{\Gk}{G_k}
\newcommand{\Dk}{\dom_k}
\date{}
\begin{document}

\title{Decremental Optimization of Dominating Sets Under the Reconfiguration Framework \footnote{Partially supported by JSPS and MAEDI under the Japan-France Integrated Action Program (SAKURA). The first and third author is partially supported by ANR project GrR(ANR-18-CE40-0032). The second author is partially supported by JSPS KAKENHI Grant Number JP19J10042, Japan. The third author is partially supported by ANR project GraphEn (ANR-15-CE40-0009). The fourth author is partially supported by JST CREST Grant Number JPMJCR1402, and JSPS KAKENHI Grant Numbers JP17K12636, JP18H04091 and JP20K11666, Japan.}
}

\author[1]{Alexandre Blanch\'e}
\author[2]{Haruka Mizuta}
\author[1]{Paul Ouvrard}
\author[2]{Akira Suzuki}

\affil[1]{Univ. Bordeaux, Bordeaux INP, CNRS, LaBRI, UMR5800, F-33400 Talence, France \thanks{\{alexandre.blanche, paul.ouvrard\}@u-bordeaux.fr}}
\affil[2]{Graduate School of Information Sciences, Tohoku University, Aoba 6-6-05, Aramaki-aza, Aoba-ku, Sendai, Miyagi, 980-8579, Japan \thanks{haruka.mizuta.s4@dc.tohoku.ac.jp, a.suzuki@ecei.tohoku.ac.jp}}

\maketitle              

\begin{abstract}
Given a dominating set, how much smaller a dominating set can we find through elementary operations? Here, we proceed by iterative vertex addition and removal while maintaining the property that the set forms a dominating set of bounded size. This can be seen as the optimization variant of the dominating set reconfiguration problem, where two dominating sets are given and the question is merely whether they can be reached from one another through elementary operations.
We show that this problem is PSPACE-complete, even if the input graph is a bipartite graph, a split graph, or has bounded pathwidth. On the positive side, we give linear-time algorithms for cographs, trees and interval graphs.
We also study the parameterized complexity of this problem. More precisely, we show that the problem is W[2]-hard when parameterized by the upper bound on the size of an intermediary dominating set. On the other hand, we give fixed-parameter algorithms with respect to the minimum size of a vertex cover, or $d+s$ where $d$ is the degeneracy and $s$ is the upper bound of the output solution.
\end{abstract}

\section{Introduction}
		
	\begin{figure}[t]
		\centering
		\includegraphics[width=0.85\textwidth]{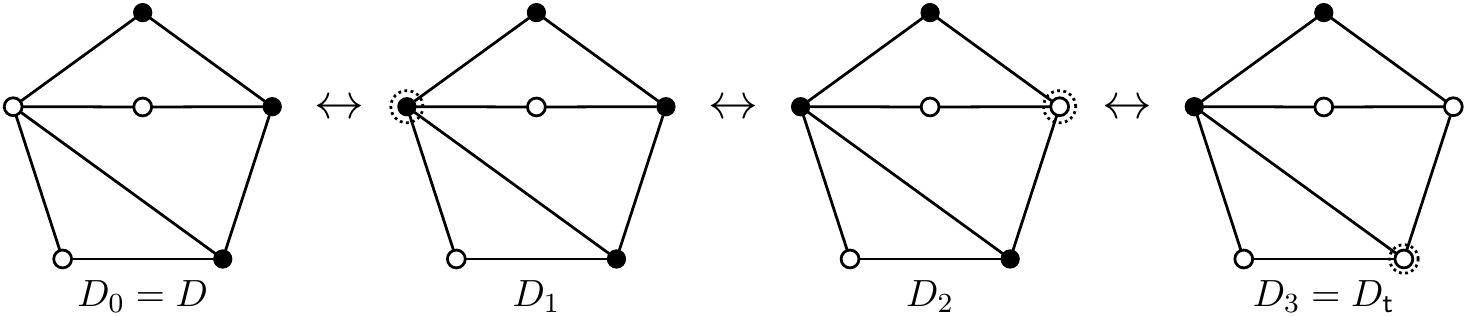}
		\caption{Reconfiguration sequence between $\dom_0$ and $\dom_3$ via dominating sets $\dom_1,\dom_2$ with upper bound $\thr = 4$, where vertices contained in a dominating set are depicted by black circles, and added or removed vertices are surrounded by dotted circles.}
		\label{fig:sequence}
	\end{figure}
	
	Recently, {\em Combinatorial reconfiguration}~\cite{IDHPSUU11} has been extensively studied in the field of theoretical computer science.
	(See, e.g., surveys \cite{H13,N18}.)
	A reconfiguration problem is generally defined as follows:
	we are given two feasible solutions of a combinatorial search problem, and asked to determine whether we can transform one into the other via feasible solutions so that all intermediate solutions are obtained from the previous one by applying the specified reconfiguration rule.
	This framework is applied to several well-studied combinatorial search problems; for example,
	{\sc Independent Set}~\cite{BB17,HD05,KMM12,LM18},
	{\sc Vertex Cover}~\cite{MNR14,MNRSS17},
	{\sc Dominating Set}~\cite{HIMNOST16,LMPRS18,MNRSS17,SMN16}, and so on.
	
	The {\sc Dominating Set Reconfiguration} problem is one of the well-studied reconfiguration problems.
	For a graph $G=(V,E)$, a vertex subset $\dom \subseteq V$ is called a {\em dominating set} of $G$ if $\dom$ contains at least one vertex in the closed neighborhood of each vertex in $V$.
	Figure~\ref{fig:sequence} illustrates four dominating sets of the same graph.
	Suppose that we are given two dominating sets $\dom_\ini$ and $\dom_\desire$ of a graph whose cardinalities are at most a given upper bound $\thr$.
	Then the {\sc Dominating Set Reconfiguration} problem asks to determine whether we can transform $\dom_\ini$ into $\dom_\desire$ via dominating sets of cardinalities at most $\thr$ such that all intermediate ones are obtained from the previous one by adding or removing exactly one vertex.
	Note that this reconfiguration rule, i.e. adding or removing exactly one vertex while keeping the cardinality constraint, is called {\em the token addition and removal} ({\TAR}) rule.
	Figure~\ref{fig:sequence} illustrates an example of transformation between two dominating sets $\dom_0$ and $\dom_3$ for an upper bound $\thr = 4$.

	Combinatorial reconfiguration models ``dynamic'' transformations of systems, where we wish to transform the current configuration of a system into a more desirable one by a step-by-step transformation.
	In the current framework of combinatorial reconfiguration, we need to have in advance a target (a more desirable) configuration.
	However, it is sometimes hard to decide a target configuration, because there may exist exponentially many desirable configurations.
	Based on this situation, Ito \emph{et al.} introduced the new framework of reconfiguration problems, called {\em optimization variant}~\cite{OPT-ISR}.
	In this variant, we are given a single solution as a current configuration, and asked for a more ``desirable'' solution	reachable from the given one.
	This variant was introduced very recently, hence it has only been applied to {\sc Independent Set Reconfiguration} to the best of our knowledge.
    Therefore and since {\sc Dominating Set Reconfiguration} is one of the well-studied reconfiguration problems as we already said, we focus on this problem and study it under this framework.
	
	\subsection{Our problem}
		In this paper, we study the optimization variant of {\sc Dominating Set Reconfiguration}, denoted by {\sc OPT-DSR}. To avoid confusion, we call the original {\sc Dominating Set Reconfiguration} the {\em reachability variant}, and  we denote it by {\sc REACH-DSR}. 
		Suppose that we are given a graph $G$, two integers $\thr,\sol$, and a dominating set $\dom$ of $G$ whose cardinality is at most $\thr$; we call $\thr$ an {\em upper bound} and $\sol$ a {\em solution size}.
		Then {\sc OPT-DSR} asks for a dominating set $\dom_\desire$ satisfying the following two conditions: (a)~the cardinality of $\dom_\desire$ is at most $\sol$, and (b)~$\dom_\desire$ can be transformed from $\dom$ under the {\TAR} rule with upper bound $\thr$. 
		For example, if we are given a dominating set $\dom_0$ in Figure~\ref{fig:sequence} and two integers $\thr = 4$ and $\sol = 2$, then one of the solutions is $\dom_3$, because $\dom_3$ can be transformed from $\dom_0$ and $|D_3| \leq 2$ holds.

	\subsection{Related results}
		Although {\sc OPT-DSR} is being introduced in this paper, some results for {\sc REACH-DSR} relate to {\sc OPT-DSR} in the sense that the techniques to show the computational hardness or construct an algorithm will be used in our proof for {\sc OPT-DSR}.
		We thus list such results for {\sc REACH-DSR} in the following.
		
		There are several results for the polynomial-time solvability of {\sc REACH-DSR}.
		Haddadan \emph{et al.}~\cite{HIMNOST16} showed that {\sc REACH-DSR} under {\TAR} rule is PSPACE-complete for split graphs, for bipartite graphs, and for planar graphs, while linear-time solvable for interval graphs, for cographs, and for forests.
		{\sc REACH-DSR} is also studied well from the viewpoint of fixed-parameter (in)tractability.
		Mouawad \emph{et al.}~\cite{MNRSS17} showed that {\sc REACH-DSR} under {\TAR} is W[2]-hard when parameterized by an upper bound $\thr$.
		As a positive result, Lokshtanov \emph{et al.}~\cite{LMPRS18} gave a fixed-parameter algorithm with respect to $\thr + d$ for graphs that exclude $K_{d,d}$ as a subgraph.
		
		Ito {\em et al.} studied optimization variant of {\sc Independent Set Reconfiguration} ({\sc OPT-ISR})~\cite{OPT-ISR}.
        More precisely, they proved that this problem is PSPACE-hard on bounded pathwidth, NP-hard on planar graphs, while linear-time solvable on chordal graphs.
        They also gave an XP-algorithm with respect to the solution size, and a fixed-parameter algorithm with respect to both solution size and degeneracy.

	\subsection{Our results}
	
		\begin{figure}[bt]
    		\centering
			\includegraphics[width=0.9\textwidth]{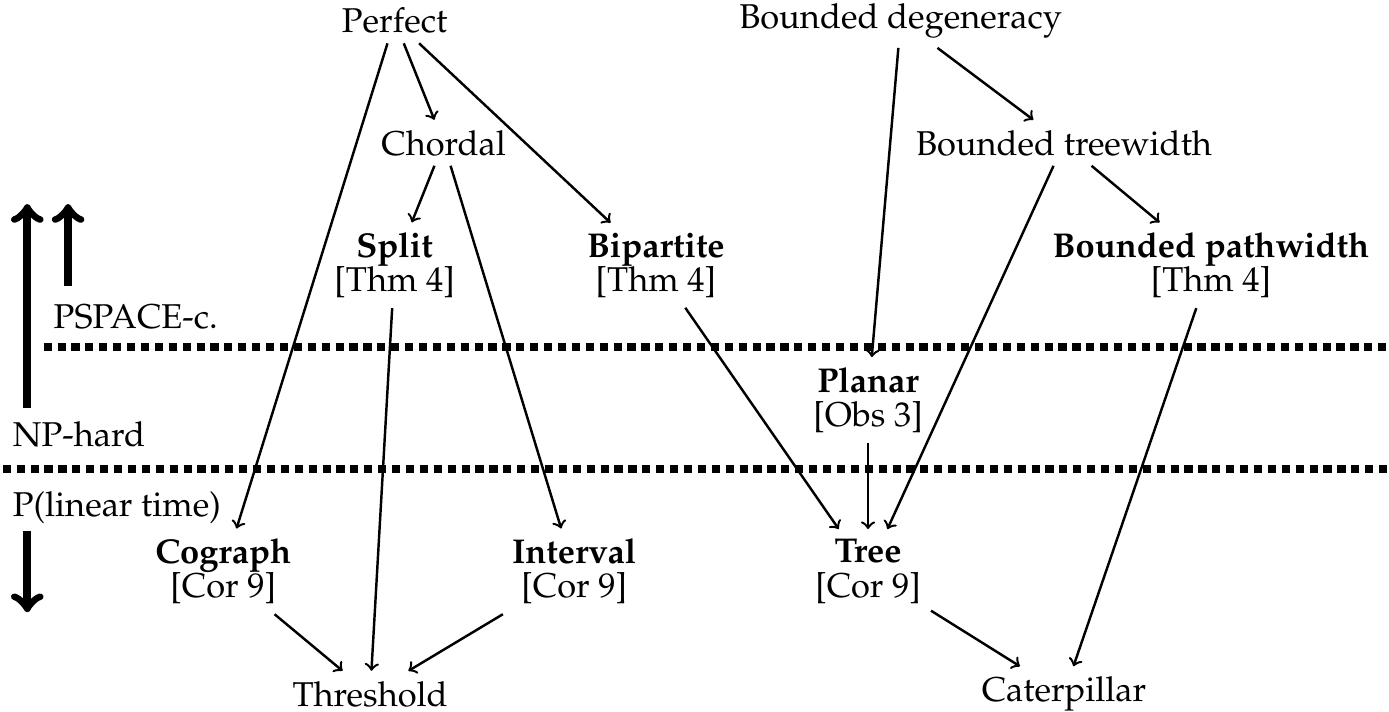}
    		\caption{Our results for polynomial-time solvability with respect to graph classes, where A~$\rightarrow$~B means that the class A contains the class B.} \label{fig:graphclasses}
		\end{figure}

		In this paper, we study {\sc OPT-DSR} from the viewpoint of the polynomial-time (in)tractability and fixed-parameter (in)tractability.
		
		We first study the polynomial-time solvability of {\sc OPT-DSR} with respect to graph classes (See Figure~\ref{fig:graphclasses}).
		Specifically, we show that the problem is PSPACE-complete even for split graphs, for bipartite graphs, and for bounded pathwidth graphs, and NP-hard for planar graphs with bounded maximum degree.
		On the other hand, the problem is linear-time solvable for cographs, trees and interval graphs.
		The inclusions of these graph classes are represented in Figure~\ref{fig:graphclasses}.
		
		We then study the fixed-parameter (in)tractability of {\sc OPT-DSR}.
		We first focus on the following four graph parameters: the degeneracy $\dege$, the maximum degree $\mdeg$, the pathwidth $\pw$, and the vertex cover number $\vc$ (that is the size of a minimum vertex cover).
		Figure~\ref{fig:result_parameter}(a) illustrates the relationship between these parameters, where A~$\rightarrow$~B means that the parameter A is bounded by some function of B.
		This relation implies that if we have a result stating that {\sc OPT-DSR} is fixed-parameter tractable for A then the tractability for B follows, while if we have a negative (i.e. intractability) result for B then it extends to A.
		From results for polynomial-time solvability, we show the PSPACE-completeness for fixed $\pw$ and NP-hardness for fixed $\mdeg$, and hence the problem is fixed-parameter intractable for each parameter $\pw$, $\mdeg$ and $\dege$ under P $\neq$ PSPACE or P $\neq$ NP.
		As a positive result, we give an FPT algorithm for $\vc$.
		We then consider two input parameters: the solution size $\sol$ and the upper bound $\thr$.
		(See Figure~\ref{fig:result_parameter}(c).)
		We show that {\sc OPT-DSR} is W[2]-hard when parameterized by $\thr$.
		We note that we can assume without loss of generality that $\sol < \thr$ holds, as explained in Section~\ref{sec:preliminaries}.
		Therefore, it immediately implies W[2]-hardness for $\sol$.
	 	Most single parameters (except for $\vc$) cause a negative (intractability) result.
		We thus finally consider combinations of one graph parameter and one input parameter. 
		We give an FPT algorithm with respect to $\sol + \dege$.
		(See Figure~\ref{fig:result_parameter}(b).)
		In the end, we can conclude from the discussion above
		that for any combination of a graph parameter $p \in \{ \dege, \mdeg, \pw, \vc \}$ and an input parameter $q \in \{ \sol, \thr \}$, {\sc OPT-DSR} is fixed-parameter tractable when parameterized by $p + q$.
				
		\begin{figure}[bt]
			\centering
			\includegraphics[width=0.9\textwidth]{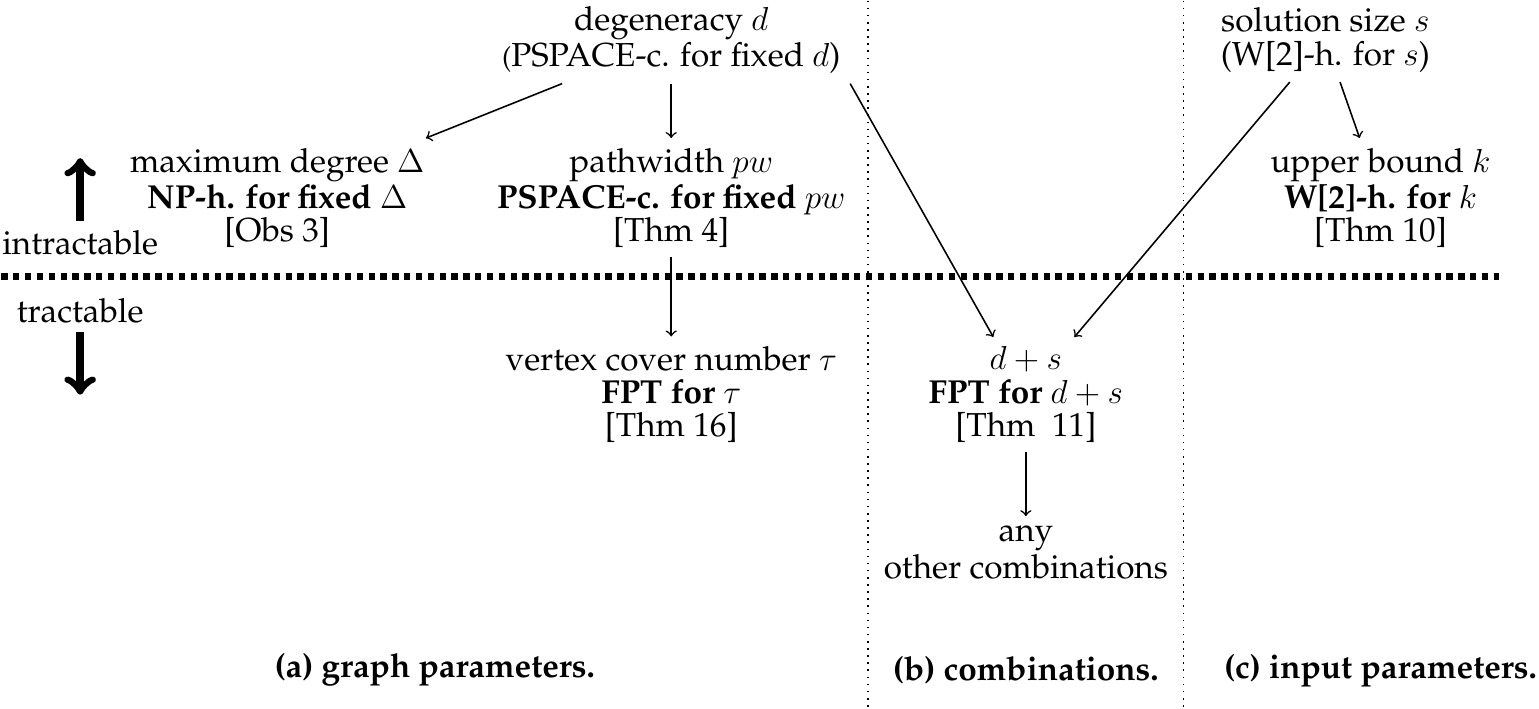}
			\caption{Our results for fixed-parameter tractability, where A~$\rightarrow$~B means that the parameter A is bounded on some function of B.} \label{fig:result_parameter}
		\end{figure}				
		
\section{Preliminaries}\label{sec:preliminaries}
	For a graph $G$, we denote by $V(G)$ and $E(G)$ the vertex set of $G$ and edge set of $G$, respectively.
	For a vertex $v \in V(G)$, we let $\neig{G}{v} = \{ w \mid vw \in E(G)\}$ and $\cneig{G}{v} = \neig{G}{v} \cup \{ v \}$; we call a vertex in $\neig{G}{v}$ a {\em neighbor} of $v$ in $G$.
	For a vertex subset $S \subseteq V(G)$, we let $\cneig{G}{S} = \bigcup_{v \in S}\cneig{G}{v}$.
	If there is no confusion, we sometimes omit $G$ from the notation.
	
	\subsection{Optimization variant of Dominating Set Reconfiguration}
	    For a graph $G=(V,E)$, a vertex subset $\dom \subseteq V$ is a {\em dominating set} of $G$ if $N[\dom] = V(G)$.
	    For a dominating set $\dom$, we say that $u \in \dom$ {\em dominates} $v \in V$ if $v \in N[u]$ holds.
	    We say that a vertex $v \in \dom$ has a \emph{private neighbor} in $\dom$ if there exists a vertex $u \in N[v]$ such that $N[u] \cap \dom = \{v\}$.
	    In other words, the vertex $u$ is dominated only by $v$ in $D$.
	    Note that the private neighbor of a vertex can be itself.
	    A dominating set is (inclusion-wise) {\em minimal} if and only if each of its vertices has a private neighbor, and {\em minimum} if and only if the cardinality is minimum among all dominating sets.
	    Notice that any minimum dominating set is minimal.

	    Let $\dom$ and $\dom^\prime$ be two dominating sets of $G$. We say that $\dom$ and $\dom^\prime$ are {\em adjacent} if $|\dom \Delta \dom^\prime| = 1$, where $\dom \Delta \dom^\prime = (\dom \setminus \dom^\prime) \cup (\dom^\prime \setminus \dom)$ and we denote this by $\dom \onestep \dom^\prime$.
	    Let us now assume that $\dom$ and $\dom^\prime$ are both of size at most $\thr$, for some given $\thr \ge 0$. Then, a {\em reconfiguration sequence} between $D$ and $D^\prime$ under the {\TAR} rule (or sometimes called a {\TAR}-sequence) is a sequence $\langle \dom = \dom_0,\dom_1,\ldots,\dom_\ell = \dom^\prime \rangle$ of dominating sets of $G$ such that:
	   	\begin{itemize}
	    	\item for each $i \in \{ 0,1,\ldots,\ell \}$, $\dom_i$ is a dominating set of $G$ such that $|\dom_i| \leq \thr$; and
	    	\item for each $i \in \{ 0,1,\ldots,\ell-1 \}$, $\dom_i \onestep \dom_{i+1}$ holds.
	    \end{itemize}
	    Considering a reconfiguration sequence under the {\TAR} rule, we sometimes write {\TAR}($\thr$) instead of {\TAR} to emphasize the upper bound $\thr$ on the size of a solution.
	    We say that $\dom^\prime$ is {\em reachable from} $\dom$ if there exists a reconfiguration sequence between $\dom$ and $\dom^\prime$; since a reconfiguration sequence is reversible, if $\dom^\prime$ is reachable from $\dom$, then $\dom$ is also reachable from $\dom^\prime$.
	    We write $\dom \sevstepk{\thr} \dom^\prime$ if $\dom^\prime$ (resp. $\dom$) is reachable from $\dom$ (resp. $\dom^\prime$).
	    Then, the {\em optimization variant} of the {\sc Dominating Set Reconfiguration} problem (OPT-DSR) is defined as follows:
	   
	    \probleme{\textsc OPT-DSR}%
            {A graph $G$, two integers $k,s \ge 0$, a dominating set $D$ of $G$ such that $|D| \le k$.}%
            {A dominating set $D_\desire$ of $G$ such that $|D_\desire| \le s$ and $D \sevstepk{k} D_\desire$ if it exists, {\sf no}-instance otherwise.}
	    
	   We denote by a 4-tuple $(G,\thr,\sol,\dom)$ an instance of OPT-DSR.
    \subsection{Useful observations}
	    From the definition of OPT-DSR, we have the following observations.
	    
	    \begin{observation}\label{obs:inequality}
	    	Let $(G,\thr,\sol,\dom)$ be an instance of OPT-DSR.
	    	If $\thr,\sol$ and $|\dom|$ violate the inequality $\sol < |\dom| \leq \thr$, then $\dom$ is a solution of the instance.
	    \end{observation}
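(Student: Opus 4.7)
The plan is to argue by simple case analysis on how the inequality $\sol < |\dom| \leq \thr$ can be violated, and in each case exhibit a trivial reconfiguration sequence (of length zero) certifying that $\dom$ itself satisfies all the requirements in the problem statement.

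First I would observe that the compound inequality has two parts, $\sol < |\dom|$ and $|\dom| \leq \thr$, so its negation means either $|\dom| \leq \sol$ or $|\dom| > \thr$. The second alternative is immediately ruled out: by the definition of an instance $(G,\thr,\sol,\dom)$ of \textsc{OPT-DSR}, the input dominating set $\dom$ is required to satisfy $|\dom| \leq \thr$, so $|\dom| > \thr$ cannot occur. Hence the only way the inequality can be violated is that $|\dom| \leq \sol$.

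In that remaining case, I claim $\dom$ itself is a valid output $\dom_\desire$. Indeed, $\dom$ is a dominating set of $G$ by hypothesis, it satisfies $|\dom| \leq \sol$ by the case we are in, and the length-one sequence $\langle \dom \rangle$ is trivially a reconfiguration sequence from $\dom$ to $\dom$ under the \TAR($\thr$) rule (the single element is a dominating set of size at most $\thr$, and there are no consecutive pairs to check), so $\dom \sevstepk{\thr} \dom$. Thus $\dom$ meets both defining conditions of a solution.

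There is no real obstacle here: the statement is essentially a sanity check ruling out the degenerate inputs from further consideration, and the proof reduces to reading off the definition of an instance and of reachability. The only minor subtlety is remembering that reachability is reflexive via the trivial (length-zero) sequence, which is why $\dom$ qualifies as its own target.
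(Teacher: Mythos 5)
Your proposal is correct and follows essentially the same route as the paper: both use the fact that $|\dom| \leq \thr$ holds by definition of an instance to reduce the violation to the single case $|\dom| \leq \sol$, whence $\dom$ itself is a solution. The only difference is that you make the reflexivity of reachability (via the trivial sequence $\langle \dom \rangle$) explicit, which the paper leaves implicit.
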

	    
    	\begin{proof}
    		By the definition of $\dom$, we know $|\dom| \leq \thr$.
    		Therefore if the inequality is violated, we have $|\dom| \leq \sol \leq \thr$ or $|\dom| \leq \thr \leq \sol$.
    		In both cases, $|\dom| \leq \sol$ holds, and hence $\dom$ is a solution. 
    	\end{proof}

    	It is observed that the condition in Observation~\ref{obs:inequality} can be checked in linear time.
    	Therefore, we sometimes assume without loss of generality that $\sol < |\dom| \leq \thr$ holds.
    	Then, another observation follows.
    	
    	\begin{observation}\label{obs:minimal}
    		Let $(G,\thr,\sol,\dom)$ be an instance of OPT-DSR such that $\sol < |\dom|$ holds.
    		If $\dom$ is minimal and $|\dom| = \thr$ holds, then the instance has no solution.
    	\end{observation}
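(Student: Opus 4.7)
The plan is to show that from $\dom$ no \TAR($\thr$)-step is possible, so the only dominating set reachable from $\dom$ is $\dom$ itself, which has size strictly greater than $\sol$ and therefore is not a solution.

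First I would rule out additions. Since $|\dom| = \thr$, adding any vertex to $\dom$ yields a set of size $\thr + 1$, which violates the upper bound of the \TAR($\thr$) rule. Hence any allowed single step from $\dom$ must be a vertex removal.

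Next I would rule out removals using minimality. By the characterization of minimality recalled just before the statement, every vertex $v \in \dom$ has a private neighbor $u \in \cneig{G}{v}$ such that $\cneig{G}{u} \cap \dom = \set{v}$. Consequently, $\dom \setminus \set{v}$ fails to dominate $u$, so $\dom \setminus \set{v}$ is not a dominating set of $G$ for any $v \in \dom$. Therefore no removal is a valid \TAR-step either.

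Combining these two observations, there is no dominating set $\dom'$ with $\dom \onestep \dom'$ and $|\dom'| \le \thr$; hence the only set $\dom'$ with $\dom \sevstepk{\thr} \dom'$ is $\dom$ itself. Since $|\dom| > \sol$ by assumption, $\dom$ does not satisfy the size condition of a solution, and the instance has no solution. There is no real obstacle here: the argument is a direct unpacking of the definitions of minimality and of the \TAR($\thr$) rule.
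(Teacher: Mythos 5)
Your proof is correct and follows essentially the same approach as the paper: rule out additions via the size bound $|\dom|=\thr$, rule out removals via minimality, and conclude that $\dom$ itself (of size exceeding $\sol$) is the only reachable set. You merely unpack the private-neighbor characterization of minimality more explicitly than the paper does.
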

    	
    	\begin{proof}
    		Since $|\dom| = \thr$, we cannot add any vertex to $\dom$ without exceeding the threshold $\thr$.
    		Besides, since $\dom$ is minimal, we cannot remove any vertex while maintaining the domination property.
    		As a result, there is no dominating set $\dom_\desire$ of size at most $\sol$ reachable from $\dom$, i.e. $\dom \sevstepk{\thr} \dom_\desire$ does not hold for any dominating set $\dom_\desire$ such that $|\dom_\desire| \le \sol$.
    	\end{proof}  
    	
	    Again, the conditions in Observation~\ref{obs:minimal} can be checked in linear time, and hence we can assume without loss of generality that $\dom$ is not minimal or $|\dom| < \thr$ holds.
	    Suppose that $\dom$ is not minimal.
        Then we can always obtain a dominating set of size less than $\thr$ by removing some vertex without private neighbor from $\dom$, that is, we have a dominating set $\dom^\prime$ with $\dom \sevstepk{k} \dom^\prime$ and $|\dom^\prime| < \thr$.
        Note that $(G,\thr,\sol,\dom)$ has a solution if and only if $(G,\thr,\sol,\dom^\prime)$ does.
	    Therefore, it suffices to consider the case where $|\dom| < \thr$ holds.
	    Combining it with Observation~\ref{obs:inequality}, we sometimes assume without loss of generality that $\sol < |\dom| < \thr$ holds.
	    
	    Finally, we have the following observation which states that {\sc OPT-DSR} is a generalization of the {\sc Dominating Set Problem}:
	    
	    \begin{observation}\label{obs:equivalent}
			Let $G = (V,E)$ be a graph and $s$ be an integer.
			Then the instance $(G,|V|,s,V)$ of OPT-DSR is equivalent to finding a dominating set of $G$ of size at most $s$.	
		\end{observation}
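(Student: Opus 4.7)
The plan is to prove the two directions of the equivalence separately, noting that since $D = V$ and $k = |V|$, the upper bound constraint is completely vacuous along any reconfiguration sequence (every subset of $V$ has size at most $k$).

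For the forward direction, I would simply observe that by the definition of \textsc{OPT-DSR}, any solution $D_\desire$ of the instance $(G, |V|, s, V)$ is a dominating set of $G$ with $|D_\desire| \le s$, which is exactly a dominating set of $G$ of size at most $s$.

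For the backward direction, I would take any dominating set $D^*$ of $G$ with $|D^*| \le s$ and explicitly exhibit a reconfiguration sequence from $V$ to $D^*$ under \TAR($|V|$). Enumerate $V \setminus D^* = \{v_1, v_2, \ldots, v_m\}$ in an arbitrary order, and define $D_i = V \setminus \{v_1, \ldots, v_i\}$ for $i = 0, 1, \ldots, m$, so that $D_0 = V$ and $D_m = D^*$. Each $D_i$ satisfies $D^* \subseteq D_i \subseteq V$, and since $D^*$ is a dominating set of $G$, so is every $D_i$. Moreover, $|D_i| \le |V| = k$ for all $i$, and $D_{i-1} \onestep D_i$ since they differ exactly in the single vertex $v_i$. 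Hence $V \sevstepk{|V|} D^*$, so $D^*$ is a valid solution to the \textsc{OPT-DSR} instance.

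There is essentially no obstacle here; the only thing to verify is that the intermediate sets $D_i$ are dominating, which is immediate from $D^* \subseteq D_i$. The whole point of the observation is to record that \textsc{OPT-DSR} inherits the hardness of \textsc{Dominating Set} whenever one sets $k = |V|$, and this is precisely what the two directions above establish.
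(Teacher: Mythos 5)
Your proof is correct and follows essentially the same approach as the paper: the paper also argues that starting from the full vertex set $V$, one removes the vertices of $V \setminus D_\desire$ one by one, each intermediate set remaining dominating because it contains the target dominating set. Your write-up is merely more explicit about the trivial forward direction and the verification that each intermediate set is dominating.
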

		
		\begin{proof}
		        Let $\dom_\desire$ be a dominating set of $G$ of size at most $s$. Since we started from a dominating set containing all the vertices of $G$, it is sufficient to remove one by one each vertex in $V \setminus \dom_\desire$ to reach $\dom_\desire.$
		\end{proof}
	
		Observation~\ref{obs:equivalent} implies that hardness results for the original \textsc{Dominating Set} problem extend to \textsc{OPT-DSR}. In particular, we get that \textsc{OPT-DSR} is NP-hard even for the case where the input graph has maximum degree 3, or is planar with maximum degree 4~\cite{GJ79}. However, we will show in Section~\ref{sec:PSPACE} that this problem is actually PSPACE-complete.

\section{Polynomial-time (in)tractability}\label{sec:polytime}
    \subsection{PSPACE-completeness for several graph classes}\label{sec:PSPACE}
		\begin{theorem}\label{thm:PSPACE}
			\textsc{OPT-DSR} is PSPACE-complete even when restricted to bounded pathwidth graphs, for split graphs, and for bipartite graphs.
		\end{theorem}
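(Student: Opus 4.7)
The plan is to handle PSPACE-membership and PSPACE-hardness separately. For membership, the standard NPSPACE argument suffices: nondeterministically guess each step of the reconfiguration sequence using only polynomial space, checking at each step that the current set is a dominating set of size at most $\thr$ and differs from the previous one by a single vertex, and halt once a set of size at most $\sol$ is reached. Since NPSPACE~$=$~PSPACE, membership follows.

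For PSPACE-hardness I would reduce from \textsc{REACH-DSR}, which is known to be PSPACE-hard on split and bipartite graphs by Haddadan et al.~\cite{HIMNOST16}; for the bounded-pathwidth case, I would either verify that one of the known reductions already produces bounded-pathwidth instances or design a pathwidth-preserving reduction from another PSPACE-hard reconfiguration problem (for instance by adapting a construction inspired by the \textsc{REACH-ISR} hardness proof on bounded-pathwidth graphs). In all three cases, given a \textsc{REACH-DSR} instance $(G, \thr, \dom_\ini, \dom_\desire)$, the goal is to build an \textsc{OPT-DSR} instance $(G', \thr', \sol, \dom'_\ini)$ such that $\dom_\ini \sevstepk{\thr} \dom_\desire$ in $G$ iff the \textsc{OPT-DSR} instance admits a solution.

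The central idea is to attach a \emph{forcing gadget} to $G$ whose only small dominating sets correspond to having $\dom_\desire$ present on the $G$-side. A natural candidate is a pendant gadget: for each $v \in \dom_\desire$, attach $t$ pendants adjacent only to $v$, where $t$ is chosen larger than any feasible dominating set size. Then any dominating set of $G'$ of size at most $\sol := |\dom_\desire|$ must include every $v \in \dom_\desire$ (since using the $t$ pendants in its place is too expensive), so in fact it equals $\dom_\desire$. Setting $\dom'_\ini = \dom_\ini \cup \{\text{all pendants}\}$ and $\thr' = \thr + t|\dom_\desire|$, the forward direction is carried out by first simulating the \textsc{REACH-DSR} sequence on the $G$-part while keeping all pendants in the set, and then removing the pendants one by one once $\dom_\desire$ is established.

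The hard parts will be twofold. First, the backward direction: extracting a valid \textsc{REACH-DSR} sequence in $G$ from a reconfiguration in $G'$, which is delicate because some vertices of $\dom_\desire$ may be temporarily ``covered'' by their pendants rather than being in the set, so the naive projection onto $V(G)$ need not be a dominating-set sequence; the argument will require a patching scheme that inserts or keeps the missing $\dom_\desire$-vertices while respecting the bound $\thr$. Second, preserving the graph class: adding pendants straightforwardly preserves bipartiteness and bounded pathwidth, but in the split case pendants attached to clique vertices can sit in the independent side, whereas pendants for $\dom_\desire$-vertices lying in the independent side must be handled differently (for example by replacing pendants with clique-side duplicates or by pre-processing $\dom_\desire$ to lie in the clique). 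I expect the split-graph case and the pathwidth case to require the most class-specific care, while the correctness of the pendant-forcing mechanism itself should carry through uniformly.
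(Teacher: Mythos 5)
Your membership argument is fine (and equivalent to the paper's, which verifies the target set in polynomial time and invokes the fact that \textsc{REACH-DSR} is in PSPACE). The hardness part, however, has a fatal flaw, not just a ``delicate backward direction.'' The pendant forcing gadget requires $\thr' = \thr + t|\dom_\desire|$ so that the forward simulation can keep all pendants in the set, but this inflated threshold destroys the token constraint that makes \textsc{REACH-DSR} hard. Concretely: assume WLOG $|\dom_\ini| < \thr$, so the initial set $\dom_\ini \cup P$ has at least one unit of slack. Add some $v_1 \in \dom_\desire$, then delete its $t$ pendants (they are now dominated by $v_1$). The available slack becomes $\thr - |\dom_\ini| - 1 + t$, and since the forcing argument needs $t > \sol = |\dom_\desire|$, you now have enough spare budget to add \emph{all} of $\dom_\desire$ to the current set, then remove $\dom_\ini \setminus \dom_\desire$ and the remaining pendants, reaching $\dom_\desire$. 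Every intermediate set is dominating and within budget. So your constructed \textsc{OPT-DSR} instance is a {\sf yes}-instance regardless of whether $\dom_\ini \sevstepk{\thr} \dom_\desire$ holds in $G$; the reduction does not decide \textsc{REACH-DSR}. There is a genuine tension here: any gadget that forces the target while raising the threshold enough to carry the gadget's tokens hands that same surplus to the $G$-side of the reconfiguration.

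The paper sidesteps this entirely by reducing from the \emph{optimization} variant of \textsc{Vertex Cover Reconfiguration} (PSPACE-complete even on bounded-pathwidth graphs, via the complementation of \textsc{OPT-ISR} from Ito et al.), using the classical \textsc{Vertex Cover}-to-\textsc{Dominating Set} constructions: the split-graph construction (clique on $V(G')$ plus one independent vertex per edge) and the per-edge triangle gadget for bounded pathwidth, with bipartite obtained from the split case. These reductions keep $\thr = \thr'$ and $\sol = \sol'$, so no extra budget is introduced; the only work is showing that a \textsf{TAR} sequence of dominating sets can be normalized so that it never touches the edge-vertices (replacing each such vertex by an endpoint of its edge). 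If you want to salvage a reduction from a reachability problem, you would need a gadget whose token cost is zero or whose tokens provably cannot migrate to the $G$-side; the optimization-to-optimization route is substantially cleaner.
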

		
		First, observe that OPT-DSR is in PSPACE. Indeed, when we are given a dominating set $D_\desire$ as a solution for some instance of OPT-DSR, we can check in polynomial time whether it has size at most $\sol$ or not. Furthermore, since REACH-DSR is in PSPACE, we can check in polynomial space whether it is reachable from the original dominating set $D$. Therefore, we can conclude that OPT-DSR is in PSPACE.
 
        We now give three reductions to show the PSPACE-hardness for split graphs, bipartite graphs and bounded pathwidth graphs, respectively. These reductions are slight adaptations of the ones of PSPACE-hardness for REACH-DSR developed in~\cite{HIMNOST16}.
        We only give the hardness proof for split graphs; the two other proofs have been moved to Appendix due to space limitation. 
        To this end, we use a polynomial-time reduction from the optimization variant of \textsc{Vertex Cover Reconfiguration}, denoted by OPT-VCR.
        
        Given a graph $G=(V,E)$, a {\em vertex cover} is a subset of vertices that contains at least one endpoint of each edge in $E$.
        We now give the formal definition of OPT-VCR. Suppose that we are given a graph $G$, two integers $\thr,\sol \geq 0$, and a vertex cover $C$ of $G$ whose cardinality is at most $\thr$. 
        Then OPT-VCR asks for a vertex cover $C_\desire$ of size at most $\sol$ reachable from $C$ under the \TAR($\thr$) rule.
        This problem is known to be PSPACE-complete even for bounded pathwidth graphs\footnote{In~\cite{OPT-ISR}, Ito \emph{et al.} actually showed the PSPACE-completeness for the optimization variant of {\sc Independent Set Reconfiguration}. However, the result can easily be converted to OPT-VCR from the observation that any vertex cover of a graph is the complement of an independent set.}~\cite{OPT-ISR}.
        
        \begin{lemma}\label{lem:PSPACEsplit}
        	\textsc{OPT-DSR} is PSPACE-hard even for split graphs.
        \end{lemma}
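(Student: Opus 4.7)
The plan is to reduce from the optimization variant of Vertex Cover Reconfiguration (\textsc{OPT-VCR}), which is PSPACE-complete~\cite{OPT-ISR}. Given an \textsc{OPT-VCR} instance $(G,\thr,\sol,\domcore)$, I would construct a split graph $G'$ by turning $V(G)$ into a clique $K$ and, for every edge $e=uv$ of $G$, adding $\thr+1$ pendant vertices $w_e^1,\ldots,w_e^{\thr+1}$, each adjacent in $G'$ exactly to $u$ and $v$; these pendants form an independent set $W$, so $(K,W)$ is a split partition of $V(G')$. The corresponding \textsc{OPT-DSR} instance is $(G',\thr,\sol,\domcore)$, with $\domcore \subseteq V(G)=K$ viewed inside $V(G')$: as a non-empty vertex cover of $G$, $\domcore$ dominates $K$ trivially and dominates every $w_e^i$ because it contains an endpoint of $e$, so it is a valid dominating set of $G'$. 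The forward direction is then immediate: any \textsc{OPT-VCR} reconfiguration sequence $\langle \domcore=C_0,\ldots,C_\ell=C_\desire\rangle$ in $G$ with $|C_i|\leq \thr$ and $|C_\desire|\leq \sol$ remains, set by set, a valid \textsc{OPT-DSR} sequence in $G'$ by the same argument applied to every $C_i$.

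For the converse direction, assume we are given an \textsc{OPT-DSR} sequence $\langle \dom_0=\domcore,\dom_1,\ldots,\dom_\ell\rangle$ in $G'$ with $|\dom_i|\leq \thr$ and $|\dom_\ell|\leq \sol$. The crucial combinatorial observation, which is the whole point of using $\thr+1$ pendant copies per edge, is that any dominating set of $G'$ of size at most $\thr$ must contain an endpoint of every edge of $G$: otherwise all $\thr+1$ copies of the corresponding $w_e$ would each need to lie in the set in order to be dominated, contradicting $|\dom_i|\leq \thr$. Hence each projection $C_i:=\dom_i\cap V(G)$ is a vertex cover of $G$ of size at most $\thr$. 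Since consecutive $\dom_i,\dom_{i+1}$ differ by exactly one vertex, consecutive projections either coincide (when the toggled vertex lies in $W$) or differ by exactly one vertex of $V(G)$; deleting consecutive duplicates yields a valid \textsc{OPT-VCR} sequence from $\domcore$ to $C_\ell$ in $G$ with $|C_\ell|\leq |\dom_\ell|\leq \sol$, so the \textsc{OPT-VCR} instance is a YES-instance.

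The main obstacle will be the converse direction, where the multiplicity $\thr+1$ of pendant copies is precisely the ingredient that forces the vertex-cover invariant on every intermediate dominating set, no matter which $W$-vertices momentarily appear in it; after that, the verification is mostly bookkeeping, with a little care needed for the degenerate case where $G$ has no edges (in which case the original \textsc{OPT-VCR} instance is already trivial). Since $|V(G')|=|V(G)|+(\thr+1)|E(G)|$ is polynomial in the input, the reduction runs in polynomial time, establishing PSPACE-hardness of \textsc{OPT-DSR} on split graphs.
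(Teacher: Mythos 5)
Your reduction is correct, but it takes a genuinely different route from the paper on the crucial (if) direction. The paper uses the minimal Bertossi-style construction --- one vertex $w_i$ per edge $e_i$ --- under which an intermediate dominating set of $G$ may well contain vertices of $B$ and project to something that is \emph{not} a vertex cover; the paper therefore has to perform surgery on the reconfiguration sequence itself, locating each maximal block of sets containing some $w\in B$, substituting an endpoint $v_a$ of the corresponding edge throughout that block, and verifying one by one that the endpoints of the modified block are still reachable by valid {\sf TAR} moves (including the bookkeeping for moves that touch $v_a$ inside the block). You instead front-load all of this work into the gadget: by attaching $\thr+1$ pendant copies of each edge-vertex, you force every dominating set of size at most $\thr$ to contain an endpoint of every edge of the original graph, so the projection onto $V(G)$ is automatically a vertex cover at every step and the converse direction reduces to the trivial observation that consecutive projections differ by at most one vertex. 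Your approach buys a much cleaner correctness argument at the cost of a larger (though still polynomial, once one notes that $\thr$ may be assumed to be at most $|V(G)|$, since no vertex cover ever needs more vertices) and less standard construction; the paper's approach keeps the graph minimal but pays for it with the delicate sequence-rewriting argument. One further point in the paper's favour: its surgery argument shows that every dominating set in the sequence can be assumed to lie inside the clique $A$, a structural fact that is explicitly reused in the subsequent reduction to bipartite graphs; your construction also yields this (the same projection argument applies), so nothing downstream is lost. Your handling of the degenerate edgeless case and of the size bound is adequate as sketched.
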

        
        \begin{proof}
	        \begin{figure}[bt]
		        \centering
		        \includegraphics[width=0.85\textwidth]{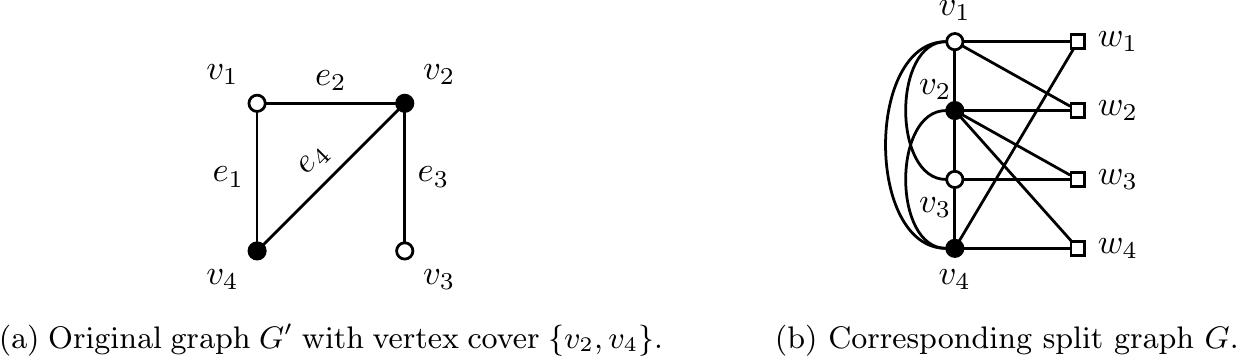}
		         \caption{Reduction for Lemma ~\ref{lem:PSPACEsplit}. Note that $\{v_2,v_4\}$ is a dominating set of $G$.}
		    \label{fig:reduction:split}
	    \end{figure}
	
    	As we said, we give a polynomial-time reduction from \textsc{OPT-VCR}. More precisely, we extend the idea developed for the NP-hardness proof of \textsc{Dominating Set} problem on split graphs~\cite{B84}.
    	
        Let $(G^\prime, \thr^\prime, \sol^\prime, C)$ be an instance of \textsc{OPT-VCR} with vertex set $V(G^\prime) = \{ v_1, v_2, \dots , v_n \}$ and edge set $E(G^\prime) = \{ e_1, e_2, \dots , e_m \}$.
    	We construct the corresponding split graph $G$ as follows (see also Figure ~\ref{fig:reduction:split}).
    	Let $V(G) = A \cup B$, where $A=V(G^\prime)$ and $B= \{ w_1, w_2, \dots , w_m \}$; the vertex $w_i \in B$ corresponds to the edge $e_i \in E(G^\prime)$.
    	We join all pairs of vertices in $A$ so that $A$ forms a clique in $G$.
    	In addition, for each edge $e_i = v_p v_q$ in $E(G^\prime)$, we join $w_i \in B$ with each of $v_p$ and $v_q$.
    	Let $G$ be the resulting graph, and let $(G, \thr = \thr^\prime, \sol = \sol^\prime, \dom = C)$ be the corresponding instance of \textsc{OPT-DSR} (we will prove later that $\dom$ is a dominating set of $G$).
    	Clearly, this instance can be constructed in polynomial time. It remains to prove that $(G', k', s', C)$ is a \textsf{yes}-instance if and only if $(G,k,s,D)$ is a \textsf{yes}-instance.
    	
    	\medskip
    	($\Rightarrow$)
    	We start by the only-if direction. Suppose that $(G',k',s',C)$ is a \textsf{yes}-instance. Then, there exists a vertex cover $C_\desire$ of size at most $s'$ reachable from $C$ under the \TAR($k$') rule. 
    	Since $k'=k$, $s=s'$ and both problems employ the same reconfiguration rule, it suffices to prove that any vertex cover of $G'$ is a dominating set of $G$. 
    	Since $C \subseteq V(G^\prime) = A$ and $A$ is a clique, all vertices in $A \setminus C$ are dominated by the vertices in $C$.
    	Thus, consider a vertex $w_i \in B$, which corresponds to the edge $e_i = v_p v_q$ in $E(G^\prime)$.
    	Then, since $C$ is a vertex cover of $G^\prime$, at least one of $v_p$ and $v_q$ must be contained in $C$.
    	This means that $w_i$ is dominated by the endpoint $v_p$ or $v_q$ in $G$.
    	Therefore, each vertex cover in the reconfiguration sequence between $C$ and $C_\desire$ is a dominating set of $G$ (including $D=C$ and $D_\desire=C_\desire$) and thus, $(G,k,s,D)$ is a \textsf{yes}-instance. 
  
        \medskip
        ($\Leftarrow$)
    	We now focus on the if direction. Suppose that $(G,k,s,D)$ is a \textsf{yes}-instance. Then, there exists a dominating set $D_\desire$ of $G$ of size at most $s$ reachable under the \TAR($k$) rule  by a sequence $\mathcal{R} = \langle D_0,D_1,\dots,D_\desire \rangle$, with $D=D_0$.
    	Recall that $D=C$ and thus $D$ is a vertex cover of $G'$.
        We want to produce a sequence of dominating sets that are subsets of $A$. To this end, we proceed by eliminating the vertices of $B$ that appears in $\mathcal{R}$ one by one from the sequence. Let $i$ be the smallest index such that $D_i \in \mathcal{R}$ contains a vertex $w \in B$ associated with the edge $v_av_b \in E(G)$. Let $j \ge i$ be the largest index such that every dominating set $D_l \in \mathcal{R}$ ($i \le l \le j$) contains $w$. Now we show that $D_{j+1}$ is reachable from $D_{i-1}$ under \TAR($k$) rule without touching $w$, that is, there is a sequence where each dominating set on the sequence does not contain $w$. For every $D_l \in \mathcal{R}$ ($i \le l \le j$) we instead consider the set $D'_l = (D_l \setminus w) \cup \{v_a\}$. Note that $v_a \in N_G(w)$, and $|D'_l| \le |D_l| \le k$. Observe that each $D'_l$ is a dominating set since $N_G[w] \subseteq N_G[v_a]$. If $v_a \in D_{i-1}$, observe that $D_{i-1} = D_i'$. Otherwise, $D_i'$ is obtainable from $D_{i-1}$ in one step since we just replace the addition of $w$ by the one of $v_a$. Moreover, due to the choice of $j$, $D_{j+1} = D_j \setminus \{w\}$. Hence, $D_{j+1}$ contains a vertex in $A$ adjacent to $w$. If this vertex is $v_a$, $D_j' = D_{j+1}$. Otherwise, $D_{j+1} = D_j' \setminus \{v_a\}$, which corresponds to a valid {\sf TAR} move. Finally, since we ensure that each dominating set $D_l'$ with $i \le l \le j$ contains $v_a$, we can ignore each move in the subsequence of $\mathcal{R}$ that touches $v_a$. Hence, either $D_l' = D'_{l+1}$ or $D_l' \onestep D'_{l+1}$ holds, for every $i \le l < j$.
        By ignoring duplicates from the sequence $\langle D_{i-1}, D'_i, \ldots , D'_j, D_{j+1} \rangle$, we obtain a desired subsequence which does not touch $w$. Therefore, we can eliminate $w$ in the subsequence $\langle D_{i-1}, D_i, \ldots , D_j, D_{j+1} \rangle$ of $\mathcal{R}$ by replacing it with the desired subsequence. Hence by repeating this process for each subsequence containing $w$ we get a new sequence that does not touch $w$ at all. We then repeat this process for every vertex of $B$ that appears in $\mathcal{R}$ and we obtain a sequence $\mathcal{R'}$ where each dominating set is a subset of $A$.
    	Finally, observe that any dominating set $\dom$ of $G$ such that $D \subseteq A = V(G^\prime)$ forms a vertex cover of $G^\prime$, because each vertex $w_i \in B$ is dominated by at least one vertex in $D \subseteq V(G^\prime)$.
    	Therefore, $(G',s',k',C)$ is a \textsf{yes}-instance.
    \end{proof}

    Finally, the two following lemmas complete the proof of Theorem \ref{thm:PSPACE}.

    \begin{lemma}\label{lem:PSPACEpathwidth}
    	\textsc{OPT-DSR} is PSPACE-hard even for bounded pathwidth graphs.
    \end{lemma}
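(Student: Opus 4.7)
The plan is to reduce from \textsc{OPT-VCR} restricted to bounded pathwidth graphs, which is PSPACE-hard by Ito \emph{et al.}, following the split-graph construction above but dropping the clique on $V(G^\prime)$ that was responsible for the unbounded pathwidth there. Given an instance $(G^\prime, k^\prime, s^\prime, C)$ of \textsc{OPT-VCR} with $\pw(G^\prime)$ bounded by a constant, and assuming without loss of generality that $G^\prime$ has no isolated vertex, I will construct the graph $G$ on vertex set $V(G^\prime) \cup \{w_e : e \in E(G^\prime)\}$ with edge set $E(G^\prime) \cup \{u w_e, v w_e : e = uv \in E(G^\prime)\}$, and set $k = k^\prime$, $s = s^\prime$, $D = C$. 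This is essentially the standard Vertex Cover to Dominating Set gadget applied edge-by-edge rather than through a global clique.

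To verify that the pathwidth stays bounded, I will start from a path decomposition of $G^\prime$ of width $p$ and, for each edge $uv \in E(G^\prime)$, locate a bag $B$ containing both $u$ and $v$ and insert a new bag $B \cup \{w_{uv}\}$ directly after $B$. Each $w_e$ then appears in a single bag of size at most $p + 2$, and the contiguity condition for $V(G^\prime)$-vertices is preserved because the inserted bag shares all its $V(G^\prime)$-entries with its left neighbor. Hence $\pw(G) \leq p + 1$, still bounded. The initial set $D = C$ is indeed a dominating set of $G$: each $w_{uv}$ is dominated by an endpoint of $uv$ in $C$, and each $v \notin C$ is dominated by some neighbor in $C$ (since $v$ is not isolated and $C$ is a vertex cover).

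For correctness, the forward direction is immediate: every vertex cover of $G^\prime$ is a dominating set of $G$, so any \TAR($k^\prime$)-sequence from $C$ to some $C_\desire$ of size at most $s^\prime$ is also a valid \TAR($k$)-sequence from $D$ to $C_\desire$ of size at most $s$. The backward direction reuses the $w_e$-elimination argument from the split-graph proof verbatim: for each $w_e$ appearing in a \TAR($k$)-sequence ending in some $D_\desire$ of size at most $s$, I substitute $w_e$ by one of its endpoints $u$ throughout each maximal subsequence containing $w_e$; this is legal since $N_G[w_e] \subseteq N_G[u]$, never increases set size (so the bound $\leq k$ is preserved), and after dropping duplicates yields a valid \TAR-sequence avoiding $w_e$. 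Iterating over all edges produces a \TAR($k$)-sequence of subsets of $V(G^\prime)$, and any dominating set of $G$ contained in $V(G^\prime)$ is automatically a vertex cover of $G^\prime$ (as each $w_e$ must be dominated and has only its endpoints as $V(G^\prime)$-neighbors), giving a solution of size at most $s^\prime$ to the original \textsc{OPT-VCR} instance.

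The main obstacle I anticipate is carefully verifying that the path decomposition manipulation preserves both bounded width and the contiguity property when $w_e$-bags are inserted for every edge; this requires checking that successive insertions do not break the interval condition of previously placed $V(G^\prime)$-vertices, which follows because each new bag is sandwiched between two bags sharing the same $V(G^\prime)$-content as $B$. Everything else follows by direct adaptation of the split-graph argument, since the elimination of $w_e$ vertices never uses the clique structure on $V(G^\prime)$, only the closed-neighborhood containment $N_G[w_e] \subseteq N_G[u]$.
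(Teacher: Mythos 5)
Your proposal is correct and follows essentially the same reduction as the paper: the standard edge-gadget from \textsc{Vertex Cover} to \textsc{Dominating Set} applied to an \textsc{OPT-VCR} instance of bounded pathwidth, with the same forward direction and the same $w_e$-elimination argument (via $N_G[w_e] \subseteq N_G[u]$) for the backward direction. Your handling of the path decomposition (inserting a fresh bag $B \cup \{w_e\}$ rather than stuffing all new vertices into existing bags) even yields a slightly better width bound of $p+1$ instead of the paper's $O(p^2)$, and your explicit treatment of isolated vertices is a small point the paper glosses over; both are welcome but do not change the nature of the argument.
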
   
    
    The pathwidth of a graph is defined as follows.
    A \emph{path decomposition} of $G$ is a sequence $\mathcal{P} = (X_1, X_2, \ldots, X_\ell)$, where each $X_i \subseteq V$, for each $i \in \{1,2,\cdots, \ell\}$, satisfies the following properties:
    \begin{enumerate}[(i)]
    	\item each vertex $v \in V$ is contained in (at least) one bag $X_i$;
    	\item each edge $uv \in E$ is contained in (at least) one bag, i.e.\ there exists $X_i$ such that $u,v \in X_i$;
    	\item for every three indices $i \le j \le k$, $X_i \cap X_k \subseteq X_j$.
    \end{enumerate}
    The \emph{width} of a given path decomposition is one less than the size of its largest bag, that is $\max_{1 \le i \le \ell} |X_i|-1$. 
    Finally, the \emph{pathwidth} of $G$, denoted by $pw(G)$, is the minimum width of any path decomposition of $G$. 
    Then the following lemma completes the proof of PSPACE-completeness for bounded pathwidth graphs.
  
    \begin{proof}
    	Our reduction follows from the original reduction from \textsc{Vertex Cover} to \textsc{Dominating Set}~\cite{GJ79}.
    	Let $(G^\prime, \thr^\prime, \sol^\prime, C)$ be an instance of \textsc{OPT-VCR}, \haruka{where the pathwidth of $G^\prime$ is bounded; note again that \textsc{OPT-VCR} is PSPACE-complete even for bounded pathwidth graphs.}
    	Let $G$ be the graph constructed from $G^\prime$ as follows: for each edge $u,w$, we add a new vertex $v_{uw}$ and join it with both of $u$ and $w$ by edges (see Fig.~\ref{fig:reduction:pathwidth}).
    	\haruka{We now claim that the pathwidth of $G$ is bounded.
    	Let $\mathcal{P}^\prime$ be the path decomposition of $G^\prime$ of width $pw(G^\prime)$.
    	Then we construct a path decomposition $\mathcal{P}$ of $G$ from $\mathcal{P}^\prime$ by adding each new vertex $v_{uw}$ to any vertex subset $X^\prime_i$ in $\mathcal{P}^\prime$ in which both $u$ and $w$ are contained; from the definition of a path decomposition, such a vertex subset $X^\prime_i$ always exists.
        For each vertex subset in $\mathcal{P}^\prime$, the number of pairs of two vertices is $O(\pw(G^\prime)^2)$, and hence the resulting path decomposition $\mathcal{P}$ has width $O(\pw(G^\prime)^2)$; it is bounded by some constant since $\pw(G^\prime)$ is.}
    	Let $(G, \thr = \thr^\prime, \sol = \sol^\prime, \dom = C)$ be the corresponding instance of \textsc{OPT-DSR}.
    	This construction can clearly be done in polynomial time.
    
    	\begin{figure}[bt]
    		\centering
        	\includegraphics[width=0.9\textwidth]{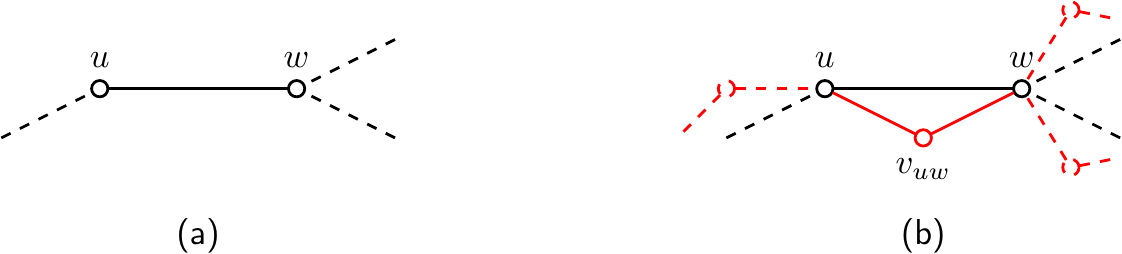}
        	\caption{Reduction for Lemma~\ref{lem:PSPACEpathwidth}. {\sf (a)} Original edge $uv$ in $G'$ and {\sf (b)} gadget in $G$ for $uv$.}
        	\label{fig:reduction:pathwidth}
        \end{figure}
        	
        It remains to prove that $(G',k',s',C)$ is a \textsf{yes}-instance for \textsc{OPT-VCR} if and only if $(G,k,s,D)$ is a \textsf{yes}-instance for \textsc{OPT-DSR}.
        
        \medskip
        ($\Rightarrow$)
        Suppose that $(G', k', s', C)$ is a \textsf{yes}-instance and let $C_\desire$ be a vertex cover of size at most $s'$ reachable from $C$ under the \TAR($\thr'$) rule, by a sequence $\mathcal{R^\prime}$.
        Since any vertex cover of $G'$ is a dominating set of $G$ and $k=k', s=s'$, then the sequence $\mathcal{R^\prime}$ yields a reconfiguration sequence from $D=C$ to $D_\desire=C_\desire$. Thus, $(G,k,s,D)$ is a \textsf{yes}-instance.
        		
        \medskip		
        ($\Leftarrow$)		
        We now prove the other direction. Suppose that $(G,k,s,D)$ is a \textsf{yes}-instance and let $\mathcal{R} = \langle D_0,D_1,\dots,D_\desire\rangle$ be a \TAR($k$) sequence of dominating sets of $G$ starting at $D = D_0$ and reaching a dominating set $D_\desire$ that satisfies $\vert D_\desire\vert \leq s$.				
        Recall that $\dom$ does not contain any newly added vertex in $V(G) \setminus V(G^\prime)$.
        We want a sequence $\mathcal{R'}$ that does not touch any newly added vertex $v_{uw}$. 
        To this end, in the same spirit as in the proof of Lemma~\ref{lem:PSPACEsplit}, we eliminate the vertices of $V(G) \setminus V(G')$ one by one. If a $D_i$ contains a vertex $v_{uw}$, then we replace $D_i$ by $D_i' = (D_i \setminus v_{uw}) \cup \{u\}$, which is also a dominating set and is reachable in one step from $D_{i-1}$. Thus, the resulting sequence does not touch $v_{uw}$, and by repeating the operation to all vertices of $V(G) \setminus V(G')$, we obtain the wanted \TAR($k$) sequence $\mathcal{R}'$ of subsets of $V(G')$.
        In this way, we can obtain a reconfiguration sequence of vertex covers in $G^\prime$ between $C$ and $C_\desire = D_\desire$ as needed.
        
        Since \textsc{OPT-VCR} is PSPACE-complete for bounded pathwidth graphs, the reduction above implies PSPACE-hardness on bounded pathwidth graphs.
    \end{proof}
    
    \begin{lemma}\label{lem:PSPACEbipartite}
        \textsc{OPT-DSR} is PSPACE-hard even for bipartite graphs.
    \end{lemma}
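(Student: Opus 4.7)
The plan is to reduce from \textsc{OPT-VCR} on general graphs (which is PSPACE-complete by the same result cited in the proof of Lemma~\ref{lem:PSPACEpathwidth}) in the same spirit as Lemma~\ref{lem:PSPACEsplit}, but replacing the clique on $V(G')$ by a bipartite gadget that plays the same role. Given an instance $(G', \thr', \sol', C)$ of \textsc{OPT-VCR}, I would construct $G$ by keeping $V(G')$ as an independent set, introducing an ``anchor'' vertex $u^\ast$ adjacent to every vertex of $V(G')$, and, exactly as in the split-graph reduction, adding for each edge $e = uw \in E(G')$ a vertex $z_e$ adjacent only to $u$ and $w$. To guarantee that $u^\ast$ plays the role of the clique in every feasible configuration, I would also attach $\thr' + 2$ pendants $q_1, \ldots, q_{\thr' + 2}$ to $u^\ast$. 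The resulting graph is bipartite with parts $V(G') \cup \{q_1, \ldots, q_{\thr'+2}\}$ and $\{u^\ast\} \cup \{z_e : e \in E(G')\}$. I would then set $\thr = \thr' + 1$, $\sol = \sol' + 1$, and $\dom = C \cup \{u^\ast\}$, which is readily a dominating set of $G$ of size at most $\thr$.

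For the forward direction, given a $\TAR(\thr')$ reconfiguration $C = C_0, C_1, \ldots, C_t$ of vertex covers of $G'$ with $|C_t| \le \sol'$, the lifted sequence $\dom_i := C_i \cup \{u^\ast\}$ is a $\TAR(\thr)$ reconfiguration of dominating sets of $G$ ending at a set of size at most $\sol$: every vertex of $V(G')$ and every pendant is dominated by $u^\ast$, and every $z_e$ is dominated by an endpoint of $e$ lying in $C_i$ since $C_i$ is a vertex cover of $G'$.

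For the backward direction, starting from an arbitrary $\TAR(\thr)$ reconfiguration $\dom_0, \ldots, \dom_t$ of dominating sets of $G$ with $|\dom_t| \le \sol$, my plan is to normalize the sequence in two stages. First, since $u^\ast$ is the unique neighbor of each pendant $q_j$, if $u^\ast$ were ever absent from some $\dom_i$ then all $\thr'+2$ pendants would have to belong to $\dom_i$, contradicting $|\dom_i| \le \thr = \thr'+1$; hence $u^\ast \in \dom_i$ for every $i$. Second, I would eliminate every occurrence of a vertex $z_e$ by the same swap-and-stretch technique as in the proof of Lemma~\ref{lem:PSPACEsplit}, replacing $z_e$ with one of its endpoints $u$ throughout the maximal consecutive subsequence containing $z_e$; occurrences of pendants $q_j$ in $\dom_i$ can be discarded by the same type of argument since $u^\ast$ already dominates them. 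After normalization, setting $C_i := \dom_i \setminus \{u^\ast\} \subseteq V(G')$ yields a $\TAR(\thr')$ reconfiguration of vertex covers of $G'$ with $|C_t| \le \sol - 1 = \sol'$, because each edge $uw \in E(G')$ is covered by an endpoint forced into $\dom_i$ to dominate $z_{uw}$.

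The main obstacle lies in the $z_e$-elimination step. The clean neighborhood inclusion $\cneig{G}{z_e} \subseteq \cneig{G}{u}$ that makes the split-graph argument go through verbatim no longer holds here: in our bipartite graph the other endpoint $w$ of $e$ is not adjacent to $u$, so one might worry that swapping $z_e$ for $u$ leaves $w$ undominated. The observation that unlocks the argument is precisely that $u^\ast$ has been forced into every $\dom_i$ and already dominates every vertex of $V(G')$ (including $w$) together with every pendant; hence removing $z_e$ from $\dom_i$ only threatens the domination of $z_e$ itself, which is restored by the added endpoint $u$.
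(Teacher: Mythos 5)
Your proof is correct, but it takes a different route from the paper's. The paper does not reduce from \textsc{OPT-VCR} directly; it reduces from \textsc{OPT-DSR} on split graphs (established in Lemma~\ref{lem:PSPACEsplit}), turning the clique $A$ into an independent set and attaching a single vertex $y$ adjacent to all of $A$, together with one pendant $x$. In the paper's backward direction, $y$ is not literally forced into every dominating set: one only knows that $x$ or $y$ must be present, and the containment $\cneig{G}{x} \subset \cneig{G}{y}$ is used to argue that $x$ can always be swapped for $y$. Your construction is essentially the composition of the paper's two reductions (\textsc{OPT-VCR} $\to$ split $\to$ bipartite) carried out in one step, with your anchor $u^\ast$ playing the role of $y$, but with a cleaner forcing gadget: attaching $\thr'+2$ pendants makes $u^\ast$ unconditionally present in every feasible configuration by a counting argument, which removes the need for the $x$-to-$y$ swap. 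You also correctly identified and resolved the one genuine subtlety of doing the reduction in a single step, namely that $\cneig{G}{z_e} \not\subseteq \cneig{G}{u}$ in the bipartite graph, so the verbatim split-graph elimination of edge-vertices does not apply; your observation that $u^\ast$'s forced presence already dominates the other endpoint $w$ is exactly what is needed. The trade-off is that the paper's route reuses the split-graph lemma and its elimination argument wholesale, whereas yours is self-contained from \textsc{OPT-VCR} at the cost of re-proving a slightly modified elimination step.
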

    
    \begin{proof}
	We give a polynomial-time reduction from \textsc{OPT-DSR} on split graphs to the same problem restricted to bipartite graphs.
	The same idea is used in the NP-hardness proof of \textsc{Dominating Set} problem on bipartite graphs~\cite{B84}.
	
	Let $(G^\prime , \thr^\prime, \sol^\prime, \dom^\prime )$ be an instance of \textsc{OPT-DSR}, where $G^\prime$ is a split graph.
	Then $V(G^\prime)$ can be partitioned into two subsets $A$ and $B$ which form a clique and an independent set in $G^\prime$, respectively.
	Furthermore, by the reduction given in the proof of Lemma~\ref{lem:PSPACEsplit}, the problem on split graph remains PSPACE-complete even if the given dominating set $\dom^\prime$ consists of vertices only in $A$.
	We thus assume that $\dom^\prime \subseteq A$ holds.
	
	We now construct the corresponding bipartite graph $G$, as follows.
	First, we delete any edge joining two vertices in $A$ so that $A$ forms an independent set.
	Then, we add a new edge consisting of two new vertices $x$ and $y$, and join $y$ with each vertex in $A$.
	
    \begin{figure}[bt]
	    \centering
		\includegraphics[width=0.85\textwidth]{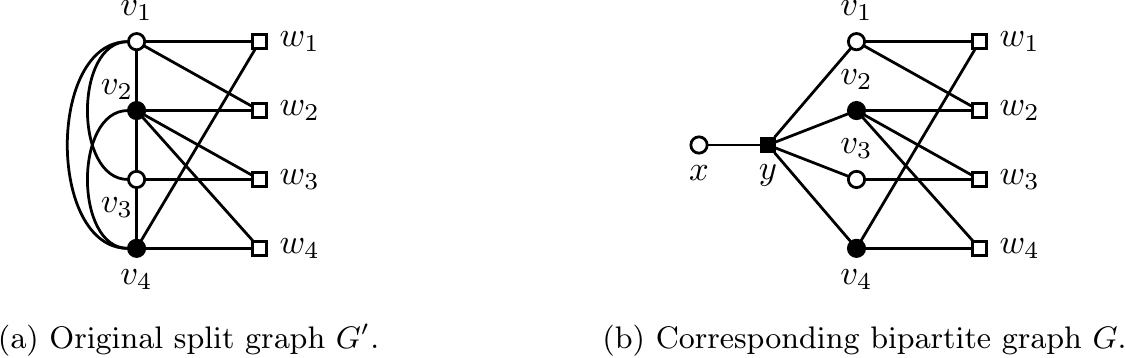}
		\caption{Reduction for Lemma ~\ref{lem:PSPACEbipartite}. Note that $\{v_2,v_4\}$ forms a dominating set of $G'$, and $\{y,v_2,v_4\}$ a dominating set of $G$.}
		\label{fig:reduction:bipartite}
	\end{figure}
		    
	The resulting graph $G$ is bipartite (see Figure~\ref{fig:reduction:bipartite} for an example).
	Let $\dom = \dom^\prime \cup \{ y \}$, $\thr = \thr^\prime + 1 $ and $\sol = \sol^\prime + 1 $.
	Then we obtain the corresponding \textsc{OPT-DSR} instance $(G , \thr, \sol, \dom )$ where $G$ is bipartite (here again, we will prove later that $\dom$ is dominating set of $G$).
	Clearly, this instance can be constructed in polynomial time.
	We then prove that $(G',k',s',D')$ is a \textsf{yes}-instance if and only if $(G,k,s,D)$ is a \textsf{yes}-instance.

    \medskip
    ($\Rightarrow$)
	We first prove the only-if direction.
	Suppose that there exists a dominating set $\dom_\desire^\prime$ of $G^\prime$ such that $\dom^\prime \sevstepk{\thr^\prime} \dom_\desire^\prime$ and $|\dom_\desire^\prime| \leq \sol^\prime$.
	Consider any dominating set $\dom^{\prime\prime}$ of $G^\prime$.
	Then, $B \subseteq \cneig{G}{\dom^{\prime\prime}}$ holds because $B \subseteq \cneig{G^\prime}{\dom^{\prime\prime}}$ and we have deleted only the edges which have both endpoints in $A$.
	Since $\cneig{G}{y} = A \cup \{ x \}$, we can conclude that $\dom^{\prime\prime} \cup \{ y \}$ is a dominating set of $G$.
	Furthermore, $| \dom_\desire^\prime \cup \{ y \} | \le \sol^\prime + 1 = \sol$.
	Thus there exists a dominating set $\dom_s$ of $G$ such that $\dom \sevstepk{\thr} \dom_\desire$ and $|\dom_\desire| \leq \sol$, as desired.
	
	\medskip
	($\Leftarrow$)
	We then prove the if direction.
	Suppose that there exists a dominating set $\dom_\desire$ of $G$, of size at most $s$ and reachable from $\dom$ by a \TAR($k$) sequence $\mathcal{R} = \langle D_0,D_1,\dots,D_\desire \rangle$, with $D=D_0$.
	Recall that $\dom = \dom^\prime \cup \{ y \}$, and notice that any dominating set of $G$ contains at least one of $x$ and $y$.
	Since $\cneig{G}{x} \subset \cneig{G}{y}$, we can assume that $\dom_s$ contains $y$.
	Therefore, we can also assume that $y$ is contained in every dominating set of the reconfiguration sequence.
	Recall that the assumption $\dom^\prime \subseteq A$ holds.
	As in the proofs of Lemmas~\ref{lem:PSPACEsplit} and \ref{lem:PSPACEpathwidth}, we can produce an equivalent sequence $\mathcal{R'}$ that does not touch any vertex of $B$. Again, if a dominating set $D_i$ touches a vertex $w_j$ associated with the edge $v_k,v_l$, we replace $D_i$ by $D_i' = (D_i \setminus w_j) \cup v_k$. We repeat the operation for all $w_j$ and obtain the wanted sequence.
	
	Consider any dominating set $D$ of $G$ in such a reconfiguration sequence.
	Since $y \in \dom$, we have $| D \cap V(G^\prime) | \le \thr -1 = \thr^\prime$.
	Furthermore, since $D \cap V(G^\prime) \subseteq A$ and $A$ forms a clique in $G^\prime$, we have $A \subseteq \cneig{G^\prime}{\dom \cap V(G^\prime)}$.
	Since there is no edge joining $y$ and a vertex in $B$, each vertex in $B$ is dominated by some vertex in $D \cap V(G^\prime)$.
	Therefore, $D \cap V(G^\prime)$ is a dominating set of $G^\prime$ with cardinality at most $\thr^\prime$, and hence there exists a dominating set $\dom_\desire^\prime$ of $G^\prime$ such that $\dom^\prime \sevstepk{\thr^\prime} \dom_\desire^\prime$ and $|\dom_\desire^\prime| \leq \sol^\prime$.
\end{proof}
    
	\subsection{Linear-time algorithms} \label{sec:linear}	
		We now explain how OPT-DSR can be solved in linear time for several graph classes.
		To this end, we deal with the concept of a canonical dominating set. A dominating set $\dom_{\sf c}$ is {\em canonical} if $\dom_{\sf c}$ is a minimum dominating set which is reachable from any dominating set $\dom$ under the \TAR($|\dom|+1$) rule.
		Then we have the following theorem.
		
		\begin{theorem}\label{thm:ptime}
			Let $\mathcal{G}$ be a class of graphs such that any graph $G \in \mathcal{G}$ has a canonical dominating set and we can compute it in linear time.
			Then OPT-DSR can be solved in linear time on $\mathcal{G}$.
		\end{theorem}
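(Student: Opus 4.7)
The plan is to use the canonical dominating set as a universal solution candidate that works whenever any solution exists. First I would normalize the instance $(G,\thr,\sol,\dom)$ using Observations~\ref{obs:inequality} and~\ref{obs:minimal}, together with the subsequent remarks, to reduce in linear time to the case $\sol < |\dom| < \thr$. All remaining work happens under this assumption.

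Next I would invoke the hypothesized linear-time procedure to compute a canonical dominating set $\dom_{\sf c}$ of $G$. By the definition given just before the theorem, $\dom_{\sf c}$ is a minimum dominating set of $G$ that is reachable from $\dom$ under the \TAR($|\dom|+1$) rule. Because we reduced to the case $|\dom|+1 \leq \thr$, every intermediate set in the witnessing reconfiguration sequence has size at most $\thr$, and so this sequence is also a valid \TAR($\thr$) sequence; in particular $\dom \sevstepk{\thr} \dom_{\sf c}$.

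The algorithm then decides in constant time based on the size of $\dom_{\sf c}$. If $|\dom_{\sf c}| \leq \sol$, then $\dom_{\sf c}$ satisfies both conditions defining a solution of OPT-DSR, and we output it. Otherwise $|\dom_{\sf c}| > \sol$; since $\dom_{\sf c}$ is a \emph{minimum} dominating set of $G$, every dominating set of $G$ has cardinality at least $|\dom_{\sf c}| > \sol$, so no dominating set $\dom_\desire$ of size at most $\sol$ exists at all (let alone a reachable one), and we answer \textsf{no}.

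There is no substantive obstacle inside this proof: the notion of canonical dominating set was tailored precisely so that reachability from the given $\dom$ comes for free, while minimality immediately closes the negative side. The entire algorithmic difficulty has been pushed into the hypothesis that such a canonical set exists and can be computed in linear time, which is what the subsequent subsections will need to establish separately for cographs, trees, and interval graphs.
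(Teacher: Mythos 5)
Your proposal is correct and follows essentially the same route as the paper's proof: normalize to $\sol < |\dom| < \thr$ via the observations, compute the canonical dominating set $\dom_{\sf c}$ in linear time, use $\thr \geq |\dom|+1$ to get reachability under \TAR($\thr$), and decide by comparing $|\dom_{\sf c}|$ with $\sol$, with minimality justifying the \textsf{no} answer. No discrepancies to report.
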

		
		\begin{proof}
        	Let $(G,\thr,\sol,\dom)$ be an instance of OPT-DSR, where $G \in \mathcal{G}$.
	        Recall that we can assume without loss of generality that $\sol < |\dom| < \thr$; we can check in linear time whether the inequality is satisfied or not, and if it is violated, then we know from Observation~\ref{obs:inequality}~and~\ref{obs:minimal} that it is a trivial instance.
	        Since $G \in \mathcal{G}$, $G$ admits a canonical dominating set and we can compute in linear time an actual one. Let $\dom_{\sf c}$ be such a canonical dominating set.
	        Then it follows from the definition that $\dom_{\sf c}$ is reachable from $\dom$ under the {\TAR}($\thr$) rule since $\thr \geq |\dom| +1$.
	        Since $\dom_{\sf c}$ is a minimum dominating set, we can output it if $|\dom_{\sf c}| \leq \sol$ holds, and {\sf no}-instance otherwise.
	        All processes can be done in linear time, and hence the theorem follows.
        \end{proof}	

		Haddadan {\em et al.} showed in~\cite{HIMNOST16} that cographs, trees (actually, forests), and interval graphs admit a canonical dominating set.
		Their proofs are constructive, and hence we can find an actual canonical dominating set.
		It is observed that the constructions on cographs and trees can be done in linear time.
		The construction on interval graphs can also be done in linear time with a nontrivial adaptation by using an appropriate data structure.
		Therefore, we have the following linear-time solvability of OPT-DSR.
		\begin{corollary}\label{cor:linear}
			OPT-DSR can be solved in linear time on cographs, trees, and interval graphs.
		\end{corollary}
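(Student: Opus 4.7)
The plan is to apply Theorem~\ref{thm:ptime} to each of the three graph classes, so the task reduces to verifying that, for cographs, trees, and interval graphs, a canonical dominating set exists and can be computed in linear time. The existence for all three classes is already established by Haddadan \emph{et al.}~\cite{HIMNOST16}, and crucially their proofs are constructive, so it only remains to analyse the running time of the three construction procedures.

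For cographs, I would recall that they admit a cotree decomposition computable in linear time, and Haddadan \emph{et al.}'s canonical construction follows the cotree in a bottom-up fashion, combining a constant-time rule at each internal node (union vs.\ join) to obtain either the union or one representative singleton, depending on whether the child is a join node. Since the cotree has $O(n+m)$ size and each node is visited once with constant-time work, the whole procedure clearly runs in linear time.

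For trees (and by extension forests), the canonical dominating set of Haddadan \emph{et al.} is produced by a standard leaf-stripping / greedy bottom-up rule: root the tree arbitrarily, process vertices in post-order, and at each vertex decide based on the state (``dominated by itself'', ``dominated by a child'', ``undominated'') in constant time. Rooting, post-order traversal, and the constant-time update at each vertex together yield a linear-time algorithm.

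The main obstacle, as the paper flags, is the interval graph case. The construction of Haddadan \emph{et al.} sweeps the intervals in a specific order and maintains, for each vertex being processed, some information about which intervals are still ``uncovered'' and which interval extends furthest to the right among those that could cover them. A naive implementation of this sweep would be superlinear. I would argue that, using a linear-time interval representation (sorted endpoint list, obtainable in linear time from a PQ-tree or consecutive-$1$'s ordering) together with a pointer or bucket structure that tracks the current rightmost uncovered position and the candidate interval with maximum right endpoint in the active window, each interval is inserted and deleted at most once and each decision step takes amortised constant time. This yields overall linear time, matching the claim in the paper. With these three linear-time constructions in hand, Theorem~\ref{thm:ptime} immediately gives the corollary.
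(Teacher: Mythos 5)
Your proposal follows exactly the same route as the paper: invoke Theorem~\ref{thm:ptime} and reduce the corollary to the linear-time constructibility of the canonical dominating sets of Haddadan \emph{et al.}~\cite{HIMNOST16} for each of the three classes. The paper states this more tersely (merely noting that the cograph and tree constructions are observed to be linear and that the interval-graph case needs a nontrivial adaptation with an appropriate data structure), whereas you supply plausible implementation details, but the argument is the same.
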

		
\section{Fixed-parameter (in)tractability}
	In this section, we study the fixed-parameter complexity of OPT-DSR with respect to several graph parameters: the upper bound $k$, solution size $s$, minimum size of a vertex cover $\tau$ and degeneracy $d$. 
	
	More precisely, we first show that OPT-DSR is W[2]-hard when parameterized by the upper bound $k$. To prove it, we use the idea of the reduction constructed by Mouawad \emph{et al.} in~\cite{MNRSS17} to show the W[2]-hardness of REACH-DSR.
	
	\begin{theorem}\label{the:w2-h}
		OPT-DSR is W[2]-hard when parameterized by the upper bound $\thr$.
	\end{theorem}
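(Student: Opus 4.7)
The plan is to reduce from \textsc{Dominating Set} parameterized by the solution size, which is W[2]-complete, by adapting the construction of Mouawad \emph{et al.}~\cite{MNRSS17} used for \textsc{REACH-DSR}. Given an instance $(H,d)$ of \textsc{Dominating Set}, we construct in polynomial time an instance $(G,\thr,\sol,\dom)$ of \textsc{OPT-DSR} as follows. Starting from $V(H)$, add a set $U = \{u_1,\ldots,u_{d+1}\}$ of \emph{guard} vertices, each made adjacent to every vertex of $V(H)$, and attach a small \emph{forcing gadget} whose role is (a)~to make $\dom := U$ itself a dominating set of $G$, and (b)~to prevent any proper subset of $U$ from dominating $G$ on its own, so that any small dominating set of $G$ is forced to use vertices of $V(H)$ to dominate $V(H)$ itself. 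Finally, set $\thr := 2d+1$ and $\sol := d$; both are linear in the source parameter $d$.

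The forward direction is direct: if $H$ admits a dominating set $S$ with $|S|\leq d$, then starting from $\dom = U$ we add the vertices of $S$ one at a time (each intermediate set still contains $U$ and is therefore a dominating set, and has size at most $(d+1)+d = \thr$), and then remove the guards $u_1,\ldots,u_{d+1}$ one at a time. Every intermediate set of the second phase is still a dominating set: the vertices of $S$ dominate $V(H)$ by assumption, they dominate every remaining guard because the guards are universal to $V(H)$, and they dominate the forcing gadget since that is how the gadget is designed. We end at $\dom_\desire := S$ with $|\dom_\desire|\leq \sol$, giving a valid \TAR($\thr$)-sequence. For the converse direction, any dominating set $\dom_\desire$ of $G$ with $|\dom_\desire|\leq \sol = d$ must, by construction of the forcing gadget, contain enough vertices of $V(H)$ to dominate $V(H)$ itself; projecting $\dom_\desire$ onto $V(H)$ then yields a dominating set of $H$ of size at most $d$.

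The main obstacle is the precise design of the forcing gadget, since it must simultaneously admit $U$ as a dominating set of $G$, forbid ``shortcut'' dominating sets of $G$ that dominate the whole graph using only auxiliary vertices (in particular no single guard should be a universal dominator), and add only $O(d)$ new vertices so that $\thr$ and $\sol$ stay linear in $d$. This is essentially the gadget used by Mouawad \emph{et al.}~\cite{MNRSS17} in their W[2]-hardness proof of \textsc{REACH-DSR}. The only substantive difference with their setting is that in \textsc{OPT-DSR} we do not have a prescribed target, so rather than showing reachability to one specific dominating set we must argue that \emph{every} dominating set of size at most $\sol$ reachable from $\dom$ projects onto a dominating set of $H$; since the constraint $|\dom_\desire|\leq d$ is tight with respect to the structure produced by the forcing gadget, this projection argument goes through essentially unchanged.
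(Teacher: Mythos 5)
Your high-level plan (reduce from \textsc{Dominating Set}, start from a trivially dominating set of $d+1$ ``guards'', set $\thr=2d+1$ and $\sol=d$, add the target set and then strip the guards) matches the paper's, and your forward direction is essentially the paper's. But there is a genuine gap exactly where you placed the ``forcing gadget'': that gadget is the entire content of the reduction, and the constraints you impose on it make it unrealizable. If each guard $u_i$ is adjacent to \emph{every} vertex of $V(H)$, then any dominating set of $G$ containing a single guard already dominates all of $V(H)$; no auxiliary gadget, however designed, can then force such a set to ``contain enough vertices of $V(H)$ to dominate $V(H)$ itself''. Your converse direction therefore has no answer to candidate solutions of the form $\{u_1\}\cup(\text{a few gadget vertices})$, and the claim that ``the projection argument goes through essentially unchanged'' is asserted rather than proved. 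Your further requirement that the gadget add only $O(d)$ new vertices is both unnecessary (an FPT reduction only needs the \emph{parameter} to be bounded by a function of $d$; the instance size may grow polynomially) and harmful, since it rules out the construction that actually works.

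The paper resolves this with a pigeonhole argument over $\thr'+1$ disjoint copies rather than a local gadget. It forms $G_0$ by adding a universal vertex $v_0$ to $G'$, takes $\thr'$ further copies $G_1,\dots,G_{\thr'}$ of $G_0$, and joins each original $v_i$ ($i\ge 1$) to all of $\cneig{G_j}{v_{j,i}}$ in every copy; the initial dominating set is the set of $\thr'+1$ copy-local universal vertices. Crucially, each ``guard'' $v_{j,0}$ is universal only \emph{inside its own copy}, not to the whole of $V(G')$. Any dominating set $\dom_\desire$ of size at most $\sol=\thr'$ must then miss some copy $G_j$ with $j\ge 1$ entirely, and the only vertices outside $G_j$ that see its interior are the original $v_i$'s, whose adjacency to $G_j$ mirrors closed neighborhoods in $G'$; hence $(V(G_0)\cap\dom_\desire)\setminus\{v_0\}$ dominates $G'$ and has size at most $\thr'$. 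Note that this converse argument is purely about the existence of a small dominating set of $G$ and never invokes reachability. To repair your proof you would need either to reproduce this replication-plus-pigeonhole construction or to exhibit a concrete gadget together with a correctness proof; as written, the key step is missing.
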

    
    \begin{proof}
	We give an FPT-reduction from the (original) {\sc Dominating Set} problem that is W[2]-hard when parameterized by its natural parameter $k$ \cite{DF19}.

	Let $(G^\prime,\thr^\prime)$ be an instance of {\sc Dominating Set}, where $|V(G^\prime)| = n^\prime$ and $V(G^\prime) = \{ v_1,v_2,\ldots,v_{n^\prime} \}$.
	Then we construct the corresponding instance $(G,\thr,\sol,\dom)$ of OPT-DSR, as follows.
	We first describe the construction of $G$.
	Let $G_0$ be the graph obtained by adding a universal vertex $v_0$ to $G^\prime$, and $G_1,G_2,\ldots,G_{\thr^\prime}$ be $\thr^\prime$ copies of $G_0$.
	The vertex set of $G$ consists of $\bigcup_{j \in \{ 0,1,\ldots,\thr^\prime \}}V(G_j)$.
	For any $j \in \{1,2,\ldots,\thr^\prime \}$ and $i \in \{0,1,\ldots,n^\prime \}$, we use $v_{j,i}$ to denote the vertex in $G_j$ corresponding to $v_i$ in $G_0$.
	Then, for each vertex $v_i$ in $G_0$ except for $v_0$, we connect $v_i$ by new edges to all vertices in $\cneig{G_j}{v_{j,i}}$ in each $j \in \{ 1,2,\ldots,\thr^\prime \}$; formally, the edge set of $G$ consists of $\bigcup_{j \in \{0,1,\ldots,\thr^\prime \}} E(G_j) \cup \bigcup_{i \in \{ 1,2,\ldots,n^\prime \}}\bigcup_{j \in \{ 1,2,\ldots,\thr^\prime \}}\{ v_{i}w \mid w \in \cneig{G_j}{v_{j,i}} \}$.
	This completes the construction of $G$; see Figure \ref{fig:reduW2} for an example of this reduction. However, for readability purposes, we do not draw all the edges between the vertices in $G'$ and those of $G_j$, for $j \in \{1,2\}$.
	The only such drawn edges are the dotted ones (in gray) that are incident to the vertices $v_1$ and $v_3$.
	We set $\thr = 2\thr^\prime + 1$, $\sol = \thr^\prime$, and $\dom = \{ v_{j,0} \mid j \in \{ 0,1,\ldots,\thr^\prime \} \}$; notice that $\dom$ has $\thr^\prime + 1$ vertices.
	In this way, we constructed the corresponding instance $(G,\thr,\sol,\dom)$.
	Then our claim is that $(G^\prime,\thr^\prime)$ is a \textsf{yes}-instance if and only if $(G,\thr,\sol,\dom)$ has a solution.
	
	\begin{figure}[bt]
		\centering
		\includegraphics[width=0.9\textwidth]{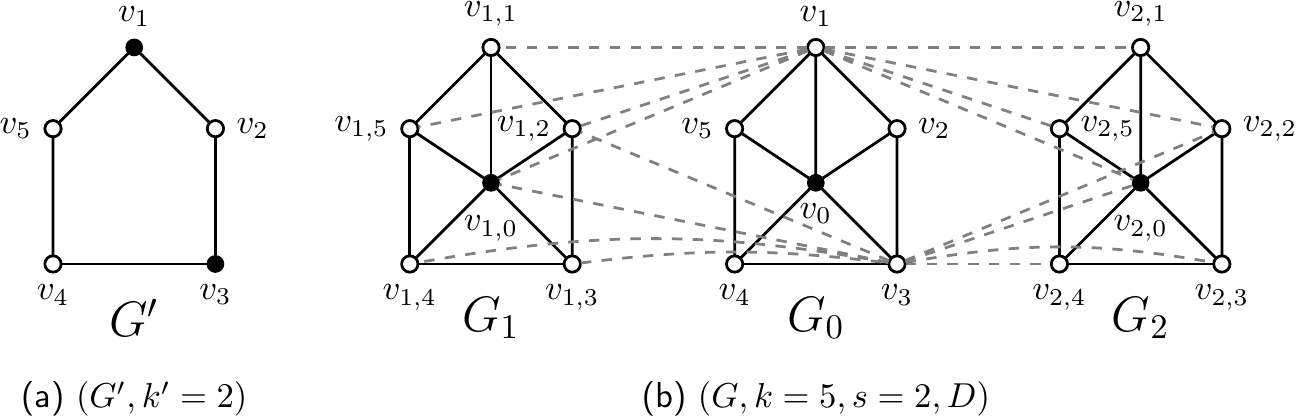}
		\caption{Reduction for Theorem \ref{the:w2-h} with $D' = \set{v_1,v_3}$ and $D = \set{v_0,v_{1,0},v_{2,0}}$.} \label{fig:reduW2}
	\end{figure}
    
    \medskip
    ($\Rightarrow$)
	We first prove the only-if direction.
	Suppose that $(G^\prime,\thr^\prime)$ is a \textsf{yes}-instance, hence there exists a dominating set $\dom^\prime$ of $G^\prime$ of size at most $\thr^\prime$.
	Then by the construction of $G$, we know that $\dom^\prime$ is also a dominating set of $G$ (if we identify the vertices of $G'$ with those of $G_0$).
	Thus it suffices to show that $\dom \sevstepk{\thr} \dom^\prime$, since $\dom^\prime$ has at most $\sol = \thr^\prime$ vertices.
	We first add vertices in $\dom^\prime$ to $\dom$ one by one; this transformation can be done under \TAR($\thr$) since $|\dom \cup \dom^\prime| \leq (\thr^\prime + 1) + \thr^\prime = \thr$.
	We then remove vertices in $\dom$ one by one.
	In this way, we can transform $\dom$ into $\dom^\prime$ under \TAR($\thr$), and hence $(G,\thr,\sol,\dom)$ has a solution $\dom^\prime$.
	
	\medskip
	($\Leftarrow$)
	We then prove the if direction.
	Suppose that $(G,\thr,\sol,\dom)$ has a solution $\dom^\prime$.
	We know that $\dom^\prime$ has at most $\sol = \thr^\prime$ vertices.
	Then, since $G$ has $\thr^\prime + 1$ copies $G_0,G_1,\ldots,G_{\thr^\prime}$, there exists a copy $G_j \in \{ G_0,G_1,\ldots,G_{\thr^\prime} \}$ such that $V(G_j) \cap \dom^\prime = \emptyset$.
	We know that $j \neq 0$ because all neighbors of $v_0$ are in $V(G_0)$, hence $\dom^\prime$ contains at least one vertex in $V(G_0)$.
	For any $p \in \{ 1,2,\ldots, \thr^\prime \} \setminus \{ j \}$, there is no edge joining a vertex in $V(G_j)$ and a vertex in $V(G_p)$.
	Therefore, for any vertex $v_{j,i}$ in $G_j$, a vertex $u \in \dom'$ which dominates $v_{j,i}$ is contained in $(V(G_0) \cap \dom^\prime) \setminus \{ v_0 \}$.
	Then, by the construction of $G$, $u$ also dominates the corresponding vertex $v_i$ in $G_0$.
	Thus, we know that $D^{\prime\prime} = (V(G_0) \cap \dom^\prime) \setminus \{ v_0 \}$ is a dominating set of $G^\prime$.
	Since $|D^{\prime\prime}| \leq |\dom^\prime| \leq \sol=\thr^\prime$ holds, $D^{\prime\prime}$ is a desired dominating set of $G^\prime$.
\end{proof}
    
	On the other hand, we give FPT algorithms with respect to the combination of the solution size $\sol$ and the degeneracy $\dege$ in Subsection~\ref{subsection:d+s} and the vertex cover number $\vc$ in Subsection \ref{sec:fpt_vc}.

	\subsection{FPT algorithm for degeneracy and solution size} \label{subsection:d+s}
		The following is the main theorem in this subsection.
		\begin{theorem} \label{thm:fpt-d+s}
			OPT-DSR is fixed-parameter tractable when parameterized by $\dege + \sol$, where $\dege$ is the degeneracy and $\sol$ the solution size.
		\end{theorem}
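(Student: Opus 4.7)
My approach combines the known FPT algorithm for \textsc{Dominating Set} on $d$-degenerate graphs with a reachability analysis tailored to the \TAR{} rule.

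First, by Observations~\ref{obs:inequality} and~\ref{obs:minimal} (and the remarks following them), I may assume $s < |D| < k$; in particular there is at least one free slot to perform an addition. Next, I would invoke the algorithm of Alon and Gutner, which solves \textsc{Dominating Set} on $d$-degenerate graphs in time $f(d,s)\cdot n^{O(1)}$ when parameterized by the solution size, to obtain a dominating set $D^*$ with $|D^*|\le s$ if one exists; otherwise the answer is \textsf{no}.

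The remaining task is to show that such a $D^*$ is always reachable from $D$ under \TAR($k$), so that it serves as a valid solution, and to construct the witness sequence in FPT time. The plan for this reachability claim is an incremental construction. In the easy regime $k \ge |D| + |D^*|$, one merges directly: add the vertices of $D^* \setminus D$ one at a time (the set remains dominating since it contains $D$), then remove the vertices of $D \setminus D^*$ one at a time (the set remains dominating since it contains $D^*$). In the tight regime $|D| < k < |D| + |D^*|$, additions and removals must be carefully interleaved: at each step, pick a vertex $v \in D^* \setminus D_{\text{cur}}$ that covers a private neighbor of some $u \in D_{\text{cur}} \setminus D^*$, add $v$, and remove $u$ once it becomes non-private. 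The intended structural claim is that within at most $g(d,s)$ such interleaved steps a removal becomes available, so that the peak size stays bounded by $|D| + g(d,s)$.

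The main obstacle will be proving this structural claim in the tight-slack regime, namely that the interleaved construction never stalls and that the ``free'' subset of $D^*$ needed at each step can be located by a bounded search tree whose branching factor is controlled by $d + s$. The intended handle is the degeneracy ordering of $G$: for the vertex $u \in D_{\text{cur}} \setminus D^*$ latest in the degeneracy ordering, the private neighbors of $u$ appearing after $u$ number at most $d$, and the earlier ones are structurally constrained by the fact that each of them is dominated by $D^*$. This should allow one to locate a subset of $D^*$ of size bounded by a function of $d + s$ that covers all privates of $u$, making $u$ removable. Handling the residual case where the tight-slack construction itself threatens to exceed $k$ requires a direct analysis exploiting that in that regime $|D|$ lies in a range of width only $g(d,s)$ below $k$, which should feed back into a bounded branching on the possible swaps.
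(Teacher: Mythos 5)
Your overall strategy has a fatal flaw: the claim that a dominating set $D^*$ with $|D^*|\le s$ is always reachable from $D$ under {\TAR}($k$) once $s<|D|<k$ is false, and no refinement of the interleaving argument can rescue it, because there are instances whose correct answer is {\sf no} even though such a $D^*$ exists. Concretely, take $t\ge 3$ stars with centers $u_1,\dots,u_t$ and leaves $p_i^1,p_i^2$ attached to $u_i$, add two vertices $w_1,w_2$ adjacent to every center, make $w_1$ adjacent to every $p_i^1$ and $w_2$ to every $p_i^2$, and set $D=\{u_1,\dots,u_t\}$, $k=t+1$, $s=2$. This graph is $2$-degenerate and $\{w_1,w_2\}$ is a dominating set of size $2$, so $\dege+\sol=4$ is constant; but each $u_i$ has the two private neighbors $p_i^1$ and $p_i^2$, and $N[p_i^1]\cap N[p_i^2]=\{u_i\}$, so after the single permitted addition no $u_i$ ever becomes removable: the only sets reachable from $D$ are $D$ itself and the sets $D\cup\{v\}$, all of size at least $t>s$. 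Your algorithm would output $D^*=\{w_1,w_2\}$ on this {\sf no}-instance, and your interleaved construction stalls exactly as one would fear (adding $w_1$ covers $p_i^1$ but leaves $p_i^2$ private to $u_i$). At a higher level, if your reachability claim held, then after the trivial preprocessing OPT-DSR would be equivalent to deciding whether a dominating set of size at most $s$ exists, contradicting the PSPACE-completeness of Theorem~\ref{thm:PSPACE} unless $\mathrm{NP}=\mathrm{PSPACE}$.

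The step you got right is the first one: the paper likewise begins by running the Alon--Gutner FPT algorithm to decide whether any dominating set of size at most $s$ exists, answering {\sf no} otherwise. But where you then try to prove unconditional reachability, the paper instead \emph{kernelizes}: it computes a domination core $\domcore$ of size at most $\dege\sol^{\dege}$ (Lemma~\ref{lem:dom_core}), repeatedly deletes a vertex $\rem\notin\domcore$ whose neighborhood trace on $\domcore$ is contained in that of another vertex $\lea$ --- first reconfiguring $D$ off $\rem$ using the free slot guaranteed by $|D|<k$ (Lemmas~\ref{lem:vr1} and~\ref{lem:vr2}) --- obtains an equivalent instance on at most $\dege\sol^{\dege}+2^{\dege\sol^{\dege}}$ vertices, and then solves that bounded-size instance by exhaustive breadth-first search over its reconfiguration graph. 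That exhaustive search is unavoidable precisely because reachability, not existence, is the crux of the problem; any correct proof must retain some mechanism that can answer {\sf no} on instances like the one above.
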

		To prove the theorem, we give an FPT algorithm with respect to $\dege + \sol$.
		Note that our algorithm uses the idea of an FPT algorithm solving the reachability variant of {\sc Dominating Set Reconfiguration}, developed by Lokshtanov \emph{et al.}~\cite{LMPRS18}.
		Their algorithm uses the concept of domination core; for a graph $G$, a {\em domination core} of $G$ is a vertex subset $\domcore \subseteq V(G)$ such that any vertex subset $\dom \subseteq V(G)$ is a dominating set of $G$ if and only if $\domcore \subseteq \cneig{G}{D}$~\cite{DFKLPPRVS16}.
		
		Suppose that we are given an instance $(G,\thr,\sol,\dom)$ of OPT-DSR where $G$ is a $\dege$-degenerate graph.
		By Observation~\ref{obs:minimal}, we can assume without loss of generality that $|\dom| < \thr$.
		We first check whether $G$ has a dominating set of size at most $\sol$: this can be done in FPT{\rm (}$\dege + \sol${\rm )} time for $\dege$-degenerate graphs~\cite{AG08}.
		If $G$ does not have it, then we can instantly conclude that this is a \textsf{no}-instance.
		
		In the remainder of this subsection, we assume that $G$ has a dominating set of size at most $\sol$.
		In this case, we kernelize the instance: we shrink $G$ by removing some vertices while keeping the existence of a solution until the size of the graph only depends on $\dege$ and $\sol$.
		To this end, we use the concept of domination core.
		\begin{lemma}[Lokshtanov \emph{et al.}~\cite{LMPRS18}]\label{lem:dom_core}
			If $G$ is a $\dege$-degenerate graph and $G$ has a dominating set of size at most $s$, then $G$ has a domination core of size at most $\dege s^\dege$ and we can find it in FPT$(\dege + s)$ time.
		\end{lemma}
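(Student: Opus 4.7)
The plan is to construct the domination core $\domcore$ by an iterative shrinking procedure. Start with $\domcore_0 = V(G)$ and, at each step, identify a \emph{redundant} vertex $v \in \domcore_i$ and set $\domcore_{i+1} = \domcore_i \setminus \set{v}$, where ``redundant'' means that every $D \subseteq V(G)$ with $|D| \le s$ satisfying $\domcore_i \setminus \set{v} \subseteq \cneig{G}{D}$ automatically satisfies $v \in \cneig{G}{D}$. This invariant guarantees that $\domcore_{i+1}$ remains a domination core with respect to $s$-sized sets. The process terminates once $|\domcore_i| \le \dege \sol^\dege$, giving the desired bound.

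The heart of the argument is a projection bound: in a $\dege$-degenerate graph, the number of distinct traces $\cneig{G}{u} \cap A$, over all $u \in V(G)$, is bounded by a polynomial in $|A|$ of degree $O(\dege)$. This can be shown by fixing a $\dege$-degeneracy ordering $v_1,\dots,v_n$ and, for each $u$, encoding its trace via the (at most $\dege$) later neighbors in the ordering plus the set of earlier neighbors in $A$; since each $u$ has at most $\dege$ back-neighbors, the number of realizable traces of a fixed ``type'' is $O(|A|^\dege)$. Summing over types yields the bound.

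Using this projection bound, I would show that whenever $|\domcore_i| > \dege \sol^\dege$ a redundant vertex exists. The argument is of pigeonhole/sunflower flavor: the collection of traces $\{\cneig{G}{u} \cap \domcore_i : u \in V(G)\}$ has only $O(|\domcore_i|^\dege)$ elements, and since $G$ admits a dominating set of size at most $\sol$, the ``relevant'' families of $\sol$-subsets can be analyzed as collections of such traces. When $|\domcore_i|$ exceeds $\dege \sol^\dege$, counting forces some $v \in \domcore_i$ to be covered by the trace of a dominator of $\domcore_i \setminus \set{v}$ in every scenario, making $v$ redundant. Testing redundancy of a given candidate vertex can be done by enumerating the polynomial list of dominator types and checking the implication, yielding the FPT$(\dege + \sol)$ running time.

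The main obstacle is the projection/counting step: the degeneracy has to be exploited in an essentially ordering-dependent way to cap the number of domination patterns, and then this combinatorial bound has to be translated into the existence of a redundant vertex rather than merely into a bound on the number of minimal dominating sets. This is the delicate part of Dvořák-style kernelization arguments for $\dege$-degenerate graphs, and it is precisely the machinery that \cite{LMPRS18} specializes to the domination-core setting.
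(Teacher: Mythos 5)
The paper does not prove this lemma at all: it is imported verbatim from Lokshtanov et al.~\cite{LMPRS18} as a black box, so there is no in-paper argument to compare yours against. Judged on its own, your proposal is a reasonable description of the \emph{shape} of the known kernelization argument (iterative shrinking of a candidate core, plus a bound on the number of distinct neighbourhood traces in a degenerate graph), but it is not a proof. The central claim --- that whenever $|\domcore_i| > \dege\sol^\dege$ some vertex of $\domcore_i$ is redundant --- is exactly the content of the lemma, and you assert it (``counting forces some $v$ to be covered \ldots in every scenario'') rather than establish it. The trace-counting step is also not correct as written: in a $\dege$-degeneracy ordering each vertex has at most $\dege$ \emph{earlier} neighbours, so encoding a trace by ``the at most $\dege$ later neighbours plus the set of earlier neighbours in $A$'' bounds the wrong side; a single vertex can have up to $|A|$ later neighbours inside $A$, and the part you leave unencoded is the unbounded one. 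The genuine bound on the number of distinct traces uses the $K_{\dege+1,\dege+1}$-freeness of $\dege$-degenerate graphs (a K\H{o}v\'ari--S\'os--Tur\'an-type count of traces of size greater than $\dege$, plus the trivial $O(|A|^{\dege})$ count of small traces), and even granted that bound one still has to convert ``few traces'' into ``some vertex is redundant'', which is the delicate step you leave open.

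There is also a definitional mismatch that would make your lemma too weak for this paper even if completed. You define redundancy of $v$ only against sets $D$ with $|D|\le\sol$, so your invariant yields an $\sol$-restricted core: any set of size at most $\sol$ whose closed neighbourhood contains $\domcore$ dominates $G$. The paper's definition of a domination core is unrestricted, and Lemma~\ref{lem:vr2} applies the core property to the sets $\dom'_i$ of a reconfiguration sequence, whose sizes go up to the threshold $\thr>\sol$. You would need to run the removal argument with redundancy quantified over all $D$ (or at least all $D$ of size at most $\thr$) for the downstream use to go through. Finally, the algorithmic claim needs care: redundancy as you define it is a universal statement over exponentially many sets, and ``enumerating the polynomial list of dominator types and checking the implication'' is not yet an algorithm; the FPT procedure in~\cite{LMPRS18} identifies a removable vertex constructively rather than by testing your predicate.
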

		
		Therefore, one can compute a domination core of $G$ of size at most $\dege \sol^\dege$ in FPT$(\dege + \sol)$ time by Lemma~\ref{lem:dom_core}. 
		In order to shrink $G$, we use the reduction rule \textbf{R1:} if there is a domination core $C$ and two vertices $\rem,\lea \in V(G) \setminus \domcore$ such that $\neig{G}{\rem} \cap \domcore \subseteq \neig{G}{\lea} \cap \domcore$, we remove $v_r$.
		We need to prove that \textbf{R1} is ``safe'', that is, we can remove $\rem$ from $G$ without changing the existence of a solution. However, if the input dominating set $\dom$ contains $\rem$, we cannot do it immediately.
		Therefore, we first remove $v_r$ from $\dom$.
		
		\begin{lemma}\label{lem:vr1}
			Let $\dom$ be a dominating set such that both $|\dom| < \thr$ and $\rem \in \dom$ hold.
			Then there exists $\dom^\prime$ such that $\rem \notin \dom^\prime$ and $\dom \sevstepk{k} \dom^\prime$, and $\dom^\prime$ can be computed in linear time.
		\end{lemma}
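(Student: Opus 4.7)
The plan is to leverage the defining property of the domination core together with the condition $N_G(v_r) \cap C \subseteq N_G(v_l) \cap C$ coming from rule \textbf{R1}. The key idea is that $v_l$ can ``take over'' $v_r$'s job of dominating the core: since $v_r, v_l \notin C$, a vertex $c \in C$ is dominated by $v_r$ precisely when $c \in N_G(v_r) \cap C$, and by assumption any such $c$ is also in $N_G(v_l) \cap C$, hence dominated by $v_l$. I will construct $D'$ by a short case analysis on whether $v_l$ already lies in $D$, using at most one extra addition step to make room for the subsequent removal of $v_r$.

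First I would handle the easy case $v_l \in D$. Here I claim $D' := D \setminus \{v_r\}$ is already a dominating set. To see this it suffices, by the domination core property, to verify that $C \subseteq N_G[D']$: any $c \in C$ with $c \in N_G[u]$ for some $u \in D \setminus \{v_r\}$ is trivially still dominated, and any $c \in C$ dominated by $v_r$ in $D$ must satisfy $c \in N_G(v_r) \cap C \subseteq N_G(v_l) \cap C$, so $c$ is dominated by $v_l \in D'$. This yields $D \onestep D'$ in one TAR step, and $|D'| < k$.

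For the case $v_l \notin D$, I would first perform an addition move: set $D^\star := D \cup \{v_l\}$. Since $|D| < k$, we have $|D^\star| \le k$, and $D^\star$ is a dominating set as a superset of $D$, so $D \onestep D^\star$ is a valid move. Then I set $D' := D^\star \setminus \{v_r\}$ and argue exactly as above (using the core inclusion) that $C \subseteq N_G[D']$, so $D'$ is a dominating set; hence $D^\star \onestep D'$ is valid and $D \sevstepk{k} D'$ in two steps. In both cases $v_r \notin D'$, as required.

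For the linear-time bound, each step is a set update plus a single membership test, and we already have $v_r, v_l$ explicitly from the application of \textbf{R1}; no search inside $D$ beyond checking $v_l \in D$ is needed. I do not expect a substantive obstacle: the only subtle point is recognizing that the core property reduces the verification of ``$D'$ is a dominating set'' to the purely local inclusion $N_G(v_r) \cap C \subseteq N_G(v_l) \cap C$, which is exactly what \textbf{R1} guarantees; the assumption $|D| < k$ is precisely what affords the one slack unit needed in the second case to swap $v_r$ for $v_l$ without violating the TAR threshold.
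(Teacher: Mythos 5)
Your proposal is correct and follows essentially the same route as the paper's proof: a case split on whether $\lea \in \dom$, using the domination core property together with the inclusion $\neig{G}{\rem} \cap \domcore \subseteq \neig{G}{\lea} \cap \domcore$ to justify removing $\rem$, and spending the one unit of slack from $|\dom| < \thr$ to first add $\lea$ in the second case. The only difference is that you spell out the core-membership verification in slightly more detail (correctly noting that $\rem \notin \domcore$ is what makes the open-neighborhood inclusion suffice), which the paper leaves implicit.
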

	
	    \begin{proof}
	        We first consider the case where $\lea \in \dom$.
	        In this case, we simply remove $\rem$ from $\dom$; let $\dom^\prime$ be the resulting vertex subset.
	        It is clear that $\dom \sevstepk{\thr} \dom^\prime$, and hence it suffices to show that $\dom^\prime$ is a dominating set of $G$.
        	We know that $\domcore \subseteq \cneig{G}{\dom}$ holds by the definition of a domination core.
	        Then since $\neig{G}{\rem} \cap \domcore \subseteq \neig{G}{\lea} \cap \domcore$ and $\lea \in \dom$ hold, we have $\domcore \subseteq \cneig{G}{\dom \setminus \{ \rem \}} = \cneig{G}{\dom^\prime}$.
	        Thus $\dom^\prime$ is a dominating set of $G$.
	
	        We then consider the remaining case where $\lea \notin \dom$.
        	In this case, we can add $\lea$ to $\dom$ since $|\dom| < \thr$. 
        	Then the resulting dominating set contains $\lea$, and we can remove $\rem$ as discussed above. 
        \end{proof}
		
		We can now redefine $\dom$ as a dominating set which does not contain $\rem$.
		We then consider removing $\rem$ from $G$.
		Let $G^\prime = G[V(G) \setminus \{ \rem \}]$.
		The following lemma ensures that removing $\rem$ keeps the existence of a solution.
		
		\begin{lemma}\label{lem:vr2}
			Let $(G,\thr,\sol,\dom)$ be an instance where $\rem \notin \dom$.
			Then, $(G,\thr,\sol,\dom)$ has a solution if and only if $(G^\prime,\thr,\sol,\dom)$ has a solution.
		\end{lemma}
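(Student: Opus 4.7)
The plan is to prove both implications using the domination core property of $C$. The key observation is that $v_r \notin C$ by the hypothesis of rule \textbf{R1}, so $C \subseteq V(G')$; combined with the neighborhood inclusion $N_G(v_r) \cap C \subseteq N_G(v_l) \cap C$, this will allow me to substitute $v_l$ for $v_r$ wherever necessary, in both directions.

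For the ($\Leftarrow$) direction, I will argue that any \TAR($k$)-sequence in $G'$ starting at $D$ is automatically a valid \TAR($k$)-sequence in $G$. Consider any dominating set $D_i'$ of $G'$ appearing in such a sequence. Since $D_i' \subseteq V(G')$, the edges of $G$ and $G'$ incident to $D_i'$ coincide, so $N_{G'}[D_i'] = N_G[D_i'] \cap V(G')$. In particular $C \subseteq V(G') \cap N_G[D_i']$, hence $C \subseteq N_G[D_i']$, and by the defining property of a domination core, $D_i'$ is a dominating set of $G$. Since the single-vertex changes and the size bound $k$ are unchanged, the same sequence witnesses a solution of $(G,k,s,D)$.

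For the ($\Rightarrow$) direction, given a \TAR($k$)-sequence $D = D_0, D_1, \ldots, D_\ell = D_t$ in $G$, I will define a substitution map $\phi$ by $\phi(S) = (S \setminus \{v_r\}) \cup \{v_l\}$ if $v_r \in S$, and $\phi(S) = S$ otherwise, and apply it termwise. Three facts then need verification. First, each $\phi(D_i)$ is a dominating set of $G'$: for any $c \in C$, either $c$ is dominated in $G$ by some $u \in D_i \setminus \{v_r\}$ (which remains in $\phi(D_i)$), or $c$ is dominated only by $v_r$, in which case $c \in N_G(v_r) \cap C \subseteq N_G(v_l) \cap C$ and $v_l \in \phi(D_i)$ still covers it. Hence $C \subseteq N_G[\phi(D_i)]$, so $\phi(D_i)$ dominates $G$ and therefore $G'$ because $\phi(D_i) \subseteq V(G')$. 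Second, $|\phi(D_i)| \le |D_i| \le k$, since $\phi$ never adds a vertex without first removing one. Third, consecutive images $\phi(D_i)$ and $\phi(D_{i+1})$ differ by at most one vertex: this follows from a short case analysis on whether the $i$-th move adds or removes $v_r$, adds or removes $v_l$, or touches some third vertex. After collapsing consecutive duplicates, the image sequence is a valid \TAR($k$)-sequence in $G'$ starting at $\phi(D_0) = D$ (since $v_r \notin D$) and ending at $\phi(D_t)$, which has size at most $|D_t| \le s$.

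The main obstacle is the case analysis establishing that $\phi$ respects \TAR transitions; although mechanical, one must carefully track the subcases depending on the membership of $v_r$ and $v_l$ in $D_i$ to confirm that each original move translates to either a single-vertex modification or a no-op in the image sequence, and that the size bound is never violated along the way.
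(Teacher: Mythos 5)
Your proposal is correct and follows essentially the same route as the paper: the ($\Leftarrow$) direction uses $C\subseteq V(G')$ together with the domination core property, and the ($\Rightarrow$) direction applies exactly the same substitution $D_i\mapsto (D_i\setminus\{\rem\})\cup\{\lea\}$ and the same three-way case analysis on membership of $\rem$ in consecutive sets. No gaps; the case analysis you defer is precisely the one the paper carries out.
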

		
		\begin{proof}
		    ($\Leftarrow$)
            We first prove the if direction.
	        Suppose that $(G^\prime,\thr,\sol,\dom)$ has a solution $\dom^\prime_s$.
        	Then there exists a reconfiguration sequence $\mathcal{\dom}^\prime = \langle \dom = \dom^\prime_0,\dom^\prime_1,\ldots,\dom^\prime_{\ell^\prime} = \dom^\prime_\desire \rangle$ of dominating sets of $G^\prime$.
        	It suffices to show that any dominating set $D^\prime_i$ of $G^\prime$ in $\mathcal{\dom}^\prime$ is also a dominating set of $G$.
	        Since $\dom^\prime_i$ is a dominating set of $G^\prime$ and $\rem \notin \domcore$, we have $\domcore \subseteq V(G^\prime) \subseteq \cneig{G^\prime}{\dom^\prime_i}$.
        	By the definition of domination core, we know that $\dom^\prime_i$ is also a dominating set of $G$.
			
			\medskip
			($\Rightarrow$)
	        We then prove only-if direction.
        	Suppose that $(G,\thr,\sol,\dom)$ has a solution $\dom_s$.
        	Then there exists a reconfiguration sequence $\mathcal{\dom} = \langle \dom = \dom_0,\dom_1,\ldots,\dom_\ell = \dom_\desire \rangle$ of dominating sets of $G$.
        	Based on $\mathcal{\dom}$, we construct another sequence $\mathcal{\dom}^\prime = \langle \dom = \dom^\prime_0,\dom^\prime_1,\ldots,\dom^\prime_\ell = \dom^\prime_\desire \rangle$ of vertex sets of $G^\prime$, where
        	\begin{eqnarray*}
	        	\dom^\prime_i =
	    	    \begin{cases}
				    \dom_i \setminus \{ \rem \} \cup \{ \lea \} & (\rem \in \dom_i) \\
				    \dom_i & (otherwise)
			    \end{cases}
		    \end{eqnarray*}
	    for each $i \in \{ 0,1,\ldots,\ell \}$.
        Notice that any vertex subset in $\mathcal{\dom}^\prime$ does not contain $\rem$.
	    Our claim is that $\dom^\prime_s$ is a solution of $(G^\prime,\thr,\sol,\dom)$.
	    To prove it, we show the following two statements: 
			
	    \begin{enumerate}[(i)]
		    \item \label{case:a} for each $i \in \{ 0,1,\ldots,\ell \}$, $\dom^\prime_i$ is a dominating set of $G$ (and hence of $G^\prime$); and
    		\item \label{case:b} for each $i \in \{ 0,1,\ldots,\ell-1 \}$, $|\dom^\prime_i \Delta \dom^\prime_{i+1}| \leq 1$ holds, i.e.\ we have $\dom^\prime_i \onestep \dom^\prime_{i+1}$.
    	\end{enumerate}
			
	    Then the sequence obtained by removing redundant ones from $\mathcal{\dom}^\prime$ is a reconfiguration sequence from $\dom$ to $\dom^\prime_s$.
			
	    We first show the statement \ref{case:a}.
	    Let $\dom_i$ be any dominating set in $\mathcal{\dom}$.
	    If $\rem \notin \dom_i$, then the statement clearly holds.
	    Thus we consider the other case where $\rem \in \dom_i$.
	    Since $\dom_i$ is a dominating set of $G$, we know $\domcore \subseteq V(G) \subseteq \cneig{G}{\dom_i}$.
	    Furthermore, since $\neig{G}{\rem} \cap \domcore \subseteq \neig{G}{\lea} \cap \domcore$, we have $\domcore \subseteq \cneig{G}{\dom_i \setminus \{ \rem \} \cup \{ \lea \}} \subseteq \cneig{G}{\dom^\prime_i}$.
	    By the definition of domination core, $\dom^\prime_i$ is a dominating set of $G$, and hence the statement \ref{case:a} follows.
			
	    We then show the statement \ref{case:b}.
	    Let $\dom_i$ and $\dom_{i+1}$ be any two consecutive dominating sets in $\mathcal{\dom}$.
	    Then, we know $|\dom_i \Delta \dom_{i+1}| = 1$.
	    We assume without loss of generality that $\dom_i \subseteq \dom_{i+1}$; otherwise the proof is symmetric.
	    We prove the statement in the following three cases: 
	    \begin{itemize}
	        \item {\bf Case~1:} both $\rem \notin \dom_i$ and $\rem \notin \dom_{i+1}$ hold; 
	   	    \item {\bf Case~2:} either $\rem \in \dom_i$ or $\rem \in \dom_{i+1}$ holds (but not both); and
	        \item {\bf Case~3:} both $\rem \in \dom_i$ and $\rem \in \dom_{i+1}$ hold.
	    \end{itemize}
			
	    In {\bf Case~1}, we know that $|\dom^\prime_i \Delta \dom^\prime_{i+1}| = |\dom_i \Delta \dom_{i+1}| = 1$, and hence the statement clearly holds.
	    We then consider {\bf Case~2}.
	    In this case, since $\dom_i \subseteq \dom_{i+1}$, we observe that $\rem \notin \dom_i$ and $\rem \in \dom_{i+1}$, and hence $\{ \rem \} = \dom_{i+1} \setminus \dom_{i}$ 
    	Therefore, $\dom^\prime_i \Delta \dom^\prime_{i+1} = \dom_i \Delta (\dom_{i+1} \setminus \{ \rem \} \cup \{ \lea \}) \subseteq \{ \lea \}$.
	    Thus we can conclude that $|\dom^\prime_i \Delta \dom^\prime_{i+1}| \leq 1$, and hence the statement follows.
    	We finally deal with {\bf Case~3}.
    	In this case, we have $\dom^\prime_i \Delta \dom^\prime_{i+1} = (\dom_i \setminus \{ \rem \} \cup \{ \lea \}) \Delta (\dom_{i+1} \setminus \{ \rem \} \cup \{ \lea \}) \subseteq \dom_i \Delta \dom_{i+1}$.
	    Therefore, $|\dom^\prime_i \Delta \dom^\prime_{i+1}| \leq |\dom_i \Delta \dom_{i+1}| = 1$ holds, and hence the statement follows.
	    In this way, we can conclude that $\dom^\prime_s$ is a solution of $(G^\prime,\thr,\sol,\dom)$. This concludes the proof.
        \end{proof}
		
		We exhaustively apply the reduction rule \textbf{R1} to shrink $G$.
		Let $\Gk$ and $\Dk$ be the resulting graph and dominating set, respectively. Then, any two vertices $u,v \in V(\Gk) \setminus \domcore$ satisfy $\neig{\Gk}{u} \cap \domcore \neq \neig{\Gk}{v} \cap \domcore$ (more precisely, $\neig{\Gk}{u} \cap \domcore \not\subseteq \neig{\Gk}{v} \cap \domcore$).
		Then the following lemma completes the proof of Theorem~\ref{thm:fpt-d+s}.
		\begin{lemma}
			$(\Gk,\thr,\sol,\Dk)$ can be solved in FPT$(\dege+\sol)$ time.
		\end{lemma}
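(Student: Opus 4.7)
The plan is to show that $|V(\Gk)|$ is bounded by a computable function of $\dege + \sol$; once that is established, the kernelized instance can be solved by brute-force reconfiguration search. By Lemma~\ref{lem:dom_core} we already have $|\domcore| \le \dege\, \sol^\dege$, so only the set $Z := V(\Gk)\setminus\domcore$ needs to be bounded.

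For $Z$, the key observation is that after exhaustive application of \textbf{R1}, the family $\{\neig{\Gk}{v}\cap\domcore : v \in Z\}$ is an antichain under inclusion, so in particular its members are pairwise distinct. Combined with $\dege$-degeneracy of $\Gk$ (inherited by the bipartite subgraph between $Z$ and $\domcore$), a standard counting argument splits $Z$ into two parts: vertices whose neighborhood into $\domcore$ has size at most $\dege$, of which there are at most $\binom{|\domcore|}{\dege}$ since the neighborhoods are distinct; and vertices whose neighborhood into $\domcore$ has size greater than $\dege$, for which the antichain property combined with a degeneracy-based peeling bounds the count polynomially in $|\domcore|$ with exponent depending on $\dege$. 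This is exactly the kernel bound underlying the FPT algorithm of Lokshtanov~\emph{et al.}~\cite{LMPRS18}, which I would invoke as a black box to conclude $|V(\Gk)| \le h(\dege + \sol)$ for some computable function $h$.

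With $|V(\Gk)|$ bounded, I would solve the kernelized instance by building the reconfiguration graph $\mathcal{R}(\Gk,\thr)$ whose nodes are the dominating sets of $\Gk$ of cardinality at most $\thr$ and whose edges correspond to single vertex additions or removals, and then running BFS from $\Dk$ to decide whether some node of size at most $\sol$ is reachable. Without loss of generality $\thr \le |V(\Gk)|$ (capping $\thr$ at $|V(\Gk)|$ changes nothing, since no dominating set can be larger), so $\mathcal{R}(\Gk,\thr)$ has at most $2^{|V(\Gk)|} \le 2^{h(\dege+\sol)}$ nodes and both its construction and the BFS run in time depending only on $\dege + \sol$, times a polynomial in the size of the original input used for the kernelization step itself. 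The main obstacle is the bound on $|Z|$ in the previous paragraph: once the degeneracy-plus-antichain counting is granted, the remaining steps are essentially mechanical.
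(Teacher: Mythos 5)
Your proposal is correct and follows essentially the same route as the paper: bound $|V(\Gk)|$ by a function of $\dege+\sol$ using the domination core together with the fact that exhaustive application of \textbf{R1} leaves vertices of $V(\Gk)\setminus\domcore$ with pairwise distinct neighborhoods in $\domcore$, then build the reconfiguration graph over the dominating sets of the kernel and run BFS from $\Dk$. The only difference is that the paper stops at the crude bound $|V(\Gk)\setminus\domcore|\le 2^{|\domcore|}\le 2^{\dege\sol^{\dege}}$, which distinctness alone already gives, so your finer antichain-plus-degeneracy counting (and the black-box appeal to Lokshtanov \emph{et al.}) is unnecessary for the FPT conclusion.
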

		\begin{proof}
			We first show that the size of the vertex set of $\Gk$ is at most $f(\dege,\sol) = \dege \sol^\dege + 2^{\dege \sol^\dege}$.
			Since $|\domcore| \leq \dege \sol^\dege$, it suffices to show that $|V(\Gk) \setminus \domcore| \leq 2^{\dege \sol^\dege}$ holds.
			Recall that any two vertices $u,v \in V(\Gk) \setminus \domcore$ satisfy $\neig{\Gk}{u} \cap \domcore \neq \neig{\Gk}{v} \cap \domcore$.
			Then since the number of combination of vertices in $\domcore$ is at most $2^{|\domcore|} \leq 2^{\dege \sol^\dege}$, we have the desired upper bound $|V(\Gk) \setminus \domcore| \leq 2^{\dege \sol^\dege}$.
			
			We now prove that $(\Gk,\thr,\sol,\Dk)$ can be solved in FPT$(\dege+\sol)$ time.
			To this end, we construct an {\em auxiliary graph} $G_A$, where the vertex set of $G_A$ is the set of all dominating sets of $\Gk$, and any two nodes (that correspond to dominating sets of $\Gk$) $\dom$ and $\dom^\prime$ in $G_A$ are adjacent if and only if $|\dom \Delta \dom^\prime| =1$ holds.
			Let $n = |V(\Gk)|$ and $m = |E(\Gk)|$.
			Then the number of candidate nodes in $G_A$ (vertex subsets of $\Gk$) is bounded by $O(2^{n})$.
			For each candidate, we can check in $O(n+m)$ time if it forms a dominating set.
			Thus we can construct the vertex set of $G_A$ in $O(2^{n}(n+m))$ time.
			We then construct the edge set of $G_A$.
			There are at most $O(|V(G_A)|^2) = O(4^{n})$ pairs of nodes in $G_A$.
			For each pair of nodes, we can check in $O(n)$ time if their corresponding dominating sets differ in exactly one vertex.
			Therefore we can construct the edge set of $G_A$ in $O(4^{n}n)$ time, and hence the total time to construct $G_A$ is $O(4^{n}n + 2^{n}(n+m))$ time.
			We finally search a solution by running a breadth-first search algorithm from $\Dk$ on $G_A$ in $O(|V(G_A)|+|E(G_A)|) = O(4^{n})$ time.
			
			We can conclude that our algorithm runs in time $O(4^{n}n + 2^{n}(n+m))$ in total.
			Since $n \leq f(\dege,\sol)$ and $m \leq n^2 \leq (f(\dege,\sol))^2$, this is an FPT time algorithm.
		\end{proof}
	
	\subsection{FPT algorithm for vertex cover number}\label{sec:fpt_vc}
		Let $(G,k,s,D)$ be an instance of OPT-DSR. As in the previous section, we may first assume by Observation~\ref{obs:minimal} that $\vert D\vert <k$. We recall that $\tau(G)$ is the size of a minimum vertex cover of $G$. In order to lighten notations, we simply denote by $\tau$ the vertex cover number of the input graph. Then, we have the following:
		
		\begin{theorem}\label{thm:fpt-vc}
			OPT-DSR is fixed-parameter tractable when parameterized by $\tau$.
		\end{theorem}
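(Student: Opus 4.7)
My plan is to prove this FPT result by kernelization: I would reduce the input graph to one whose size is bounded by a function of $\tau$, and then solve the problem by brute force on the kernel. First, I would compute a minimum vertex cover $X$ of $G$ with $|X| = \tau$, which is doable in FPT time by standard techniques. Let $I = V(G) \setminus X$ be the complementary independent set. Since every vertex of $I$ has its neighborhood entirely in $X$, the vertices of $I$ partition into at most $2^\tau$ \emph{types}: for each $T \subseteq X$, set $I_T := \{v \in I : N_G(v) = T\}$.

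The key structural observation is that any two vertices $u, v \in I_T$ are \emph{twins} (they have identical closed neighborhoods), so they are interchangeable in any dominating set with the cardinality budget to perform a swap. More precisely, the ``profile'' of a dominating set $D$, consisting of the pair $(D \cap X, (|D \cap I_T|)_T)$, determines $D$ up to twin-swaps that can be realized under \TAR$(k)$ whenever $|D| < k$. Moreover, if a type $T$ satisfies $T \cap (D \cap X) = \emptyset$ then the domination constraint forces $I_T \subseteq D$, so the behavior of type $T$ in any reachable dominating set is essentially a binary choice. Exploiting this rigidity, I would establish a kernelization rule of the form: if $|I_T|$ exceeds some bound $g(\tau)$, then one excess twin of $I_T$ can be deleted from $G$ (after first being removed from $D$ via a preliminary sequence analogous to the one used in the proof of Lemma~\ref{lem:PSPACEsplit}) without changing the answer. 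Applying this rule exhaustively produces an equivalent instance $(G', k, s, D')$ with $|V(G')| \leq f(\tau)$ for an explicit function $f$.

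Finally, I would solve the kernelized instance by BFS on its reconfiguration graph, exactly as in the algorithm in Section~\ref{subsection:d+s}: this auxiliary graph has at most $2^{|V(G')|} = 2^{f(\tau)}$ vertices (one per dominating set of $G'$) and an edge between any pair differing by a single \TAR$(k)$ move, so BFS from $D'$ returns the minimum size of a reachable dominating set in FPT time. The main obstacle is establishing the kernelization rule itself: I would need to show that any \TAR$(k)$ sequence using a deleted twin $v$ can be re-routed through a surviving twin of the same type, which amounts to a pigeonhole-type argument guaranteeing that, as long as $|I_T|$ is large compared to $\tau$, some spare twin of $I_T$ is always available to take the role of $v$ while respecting the $k$-budget at every intermediate step and preserving the dominating-set property (which is immediate from the twin relation since the substitute has the same closed neighborhood).
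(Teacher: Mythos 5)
Your overall architecture (vertex cover, twin classes, kernel, BFS on the reconfiguration graph) is plausible, but the kernelization rule it rests on has a genuine gap, and the gap starts with a factual error: two vertices $u,v\in I_T$ are \emph{false} twins, not true twins. Since $I$ is an independent set, $N(u)=N(v)=T$ but $u\notin N[v]$, so their \emph{closed} neighborhoods differ, and — crucially for domination — a twin does not dominate its fellow twins. This breaks the safety of your deletion rule in the direction ``solution in the reduced graph $G'$ implies solution in $G$.'' A dominating set $D'$ of $G'$ may dominate the surviving twins of $I_T$ simply by containing all of them while avoiding $T$ entirely; such a $D'$ does not dominate the deleted twin $v$ in $G$, so an intermediate set of a \TAR$(k)$ sequence in $G'$ need not lift to a dominating set of $G$, and repairing it by inserting $v$ or a vertex of $T$ can push the size to $k+1$. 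The only clean way to forbid this configuration is to require $\vert I_T\vert > k+1$ (so that no set of size at most $k$ can contain all surviving twins), but then the reduced classes have size up to $k+1$, and $k$ is not bounded by any function of $\tau$ — so the kernel is not a kernel for parameter $\tau$, and the final BFS is no longer FPT. Your pigeonhole threshold ``$\vert I_T\vert$ large compared to $\tau$'' is the wrong comparison; the obstruction is governed by $k$.

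The paper avoids this entirely by a case split on $s$ versus $\tau$ rather than by kernelization. If $s\le\tau$, it invokes Observation~\ref{obs:degeneracy} ($d\le\tau$) so that $d+s\le 2\tau$, and runs the FPT$(d+s)$ algorithm of Section~\ref{subsection:d+s}. If $s>\tau$, Lemma~\ref{lemma:FPTyes} shows the instance is \emph{always} a yes-instance: using the slack $\vert D\vert<k$, two local rules remove the vertices of $I$ from the current dominating set one by one (possibly trading a vertex of $I$ for a neighbor in $X$), terminating at a dominating set of size at most $\vert X\vert=\tau<s$. If you want to rescue your approach, you would need either to bound the surviving class sizes by a function of $\tau$ alone — which the false-twin obstruction above seems to preclude — or to observe, as the paper does, that the regime in which large twin classes could matter ($s>\tau$) is exactly the regime in which the answer is trivially yes.
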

	
		We first establish the following fact that is going to be useful later.
		
		\begin{observation} \label{obs:degeneracy}
			If $G$ is $d$-degenerate, then $d \leq \tau$.
		\end{observation}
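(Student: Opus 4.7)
The plan is to exhibit, in every subgraph, a vertex of degree at most $\tau$, which by the definition of degeneracy yields $d \leq \tau$. Let $C$ be a minimum vertex cover of $G$, so $|C| = \tau$ and $I := V(G) \setminus C$ is an independent set.

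Let $H$ be an arbitrary (nonempty) subgraph of $G$; we find a vertex of $H$ of degree at most $\tau$ in $H$. First I would split into two cases according to whether $V(H) \cap I$ is empty. If $V(H) \cap I \neq \emptyset$, pick any vertex $v \in V(H) \cap I$. Since $I$ is independent in $G$, it is also independent in $H$, so every neighbor of $v$ in $H$ lies in $C \cap V(H)$. Hence $\deg_H(v) \leq |C \cap V(H)| \leq |C| = \tau$. If instead $V(H) \cap I = \emptyset$, then $V(H) \subseteq C$, so $|V(H)| \leq \tau$, and every vertex of $H$ has degree at most $|V(H)| - 1 \leq \tau - 1 < \tau$.

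In either case $H$ contains a vertex of degree at most $\tau$, so by definition $G$ is $\tau$-degenerate, and therefore $d \leq \tau$. There is no real obstacle here; the only subtlety is to remember the standard definition of degeneracy as the maximum, over all subgraphs, of the minimum degree, and to handle the (essentially trivial) case where the subgraph avoids the independent side.
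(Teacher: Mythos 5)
Your proof is correct and follows essentially the same route as the paper's: both take a minimum vertex cover, consider an arbitrary subgraph $H$, and split on whether $H$ contains a vertex of the independent side (which then has all its $H$-neighbors in the cover, hence degree at most $\tau$) or is entirely contained in the cover (hence has at most $\tau$ vertices). Your write-up is if anything slightly more explicit about why the degree bound holds in the first case.
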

		
		\begin{proof}
			Let $G$ be a graph, $X$ a minimum vertex cover of $G$ and $H$ be any subgraph of $G$. Recall that $G[V \setminus X]$ is an independent set. If $H$ contains a vertex $v$ outside $X$, then $v$ has a degree at most $\tau$ in $G$ and therefore in $H$. Otherwise, $H$ is a subraph of $G[X]$ and thus has at most $\tau$ vertices. Hence all vertices of $H$ have degree at most $\tau$ in $H$. Therefore, since any subgraph $H$ of $G$ contains a vertex of degree at most $\tau$, $G$ is $\tau$-degenerate.
		\end{proof}

		We are now able to get down to the proof of Theorem~\ref{thm:fpt-vc}, by providing an algorithm that solves OPT-DSR and runs in time FPT$(\tau)$.
		We first compute a minimum vertex cover $X \subseteq V(G)$ of $G$ in time FPT($\tau$) \cite{CHEN20103736}. We partition the vertices of $G$ into two components, the vertex cover $X$ and the remaining vertices $I$. By definition of vertex cover, no edge can have both endpoints outside $X$, therefore $I$ is an independent set.
		Note that if $s \le \tau$, then by Observation~\ref{obs:degeneracy} we have $d+s \leq 2\tau$, where $d$ is the degeneracy of $G$. In this case we are able to use the algorithm of the last section, that runs in time FPT$(d+s)$.
		
		We may therefore assume $\tau < s$. In that case, we have the following lemma:
	
		\begin{lemma} \label{lemma:FPTyes}
		If $\tau < s$, then $(G,k,s,D)$ is a {\sf yes}-instance.
		\end{lemma}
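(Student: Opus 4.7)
My plan is to exhibit an explicit $\TAR(k)$ reconfiguration from $D$ to the minimum vertex cover $X$, which has size $\tau < s$ and is a dominating set of $G$ (assuming WLOG no isolated vertices, as they lie in every dominating set and can be handled separately): indeed, every vertex in $I := V(G) \setminus X$ has its entire neighborhood inside the vertex cover $X$. The key structural invariant is that once the current set $D'$ contains $X$, every $u \in D' \setminus X \subseteq I$ is redundant, because $N(u) \subseteq X \subseteq D'$ dominates $u$ and each neighbor self-dominates; hence from any such $D'$ we strip vertices one by one to reach $X$.

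The main loop I would describe maintains a dominating set $D'$ with $|D'| < k$ and proceeds as follows. \textbf{Case A:} if $D'$ has a redundant vertex, remove it; $|D'|$ decreases. \textbf{Case B:} otherwise $D'$ is minimal and $|D'| > s > \tau$; perform a ``swap'' by finding $u \in D' \cap I$ with private neighborhood $P_u = \{u\}$ (so $N(u) \cap D' = \emptyset$), picking any $v \in N(u) \subseteq X \setminus D'$ (which is nonempty since $u$ is non-isolated), adding $v$, then removing $u$. The set $D' \cup \{v\}$ is dominating as a superset of $D'$, and $(D' \cup \{v\}) \setminus \{u\}$ remains dominating because $v$ dominates $u$, while every $w \in N(u)$, not being a private neighbor of $u$, already had some other dominator in $D' \setminus \{u\}$. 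The budget is respected: Case A only shrinks $D'$, and Case B peaks at $|D'| + 1 \le k$.

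The existence of such a $u$ in Case B is the main technical point and is handled by a double-counting argument: each $w \in X \setminus D'$ with $|N(w) \cap D'| = 1$ contributes at most one to $\sum_{u \in D' \cap I} |P_u \setminus \{u\}|$, giving $\sum \le |X \setminus D'| = \tau - |D' \cap X|$. Since $|D'| > \tau$, this is strictly less than $|D' \cap I| = |D'| - |D' \cap X|$, so by averaging some $u \in D' \cap I$ satisfies $|P_u \setminus \{u\}| = 0$; combined with minimality ($|P_u| \ge 1$), this forces $P_u = \{u\}$. For termination, Case B increases $|D' \cap X|$ (capped at $\tau$) without changing $|D'|$, and once $|D' \cap X| = \tau$ the structural invariant makes every $u \in D' \cap I$ redundant, so Case A fires and $|D'|$ decreases. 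Hence the process runs for polynomially many steps and halts with $|D'| \le s$, yielding the required reconfiguration sequence. I anticipate the main obstacle to be the careful verification of the double counting and of domination preservation across the swap, both of which rely crucially on the vertex-cover/independent-set dichotomy of $V(G)$.
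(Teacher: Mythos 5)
Your proof is correct, and the elementary moves are the same as the paper's: delete a redundant vertex, or add an $X$-neighbour of an $I$-vertex and then delete that $I$-vertex, exploiting that $N(u)\subseteq X$ for every $u\in I$; both arguments respect the budget because $|D|<k$ and every swap is an addition immediately followed by a removal. Where you genuinely diverge is in certifying that a useful move is always available. The paper pre-assigns a set $T$ of ``special neighbours'', one per vertex of $X\setminus D$, only ever touches $I$-vertices outside $T$, and reads off the final bound from $|I\cap D_{\sf t}|\le |T|\le |X\setminus D_{\sf t}|$, hence $|D_{\sf t}|\le|X|=\tau$. You instead prove an exchange lemma: if the current set $D'$ is minimal and $|D'|>\tau$, then a double count of private neighbours (each private neighbour $w\ne u$ of a vertex $u\in D'\cap I$ lies in $X\setminus D'$ and is counted at most once, while $|D'\cap I|>|X\setminus D'|$) yields some $u\in D'\cap I$ with $N(u)\cap D'=\emptyset$, which can be swapped for any neighbour in $X\setminus D'$; termination then follows from the lexicographic potential $\left(|D'|,\ \tau-|D'\cap X|\right)$. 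Your route costs you the minimality/private-neighbour machinery and a separate termination argument, which the paper's $T$-bookkeeping supplies for free; in exchange you obtain a reusable structural fact (any minimal dominating set of size greater than $\tau$ contains an independent-set vertex whose only private neighbour is itself) and you may stop as soon as $|D'|\le s$ rather than driving the set all the way down to size $\tau$. One cosmetic remark: both your proof and the paper's quietly assume $G$ has no isolated vertices, and the target of your process need not literally be $X$ as your opening sentence suggests, but neither point affects correctness.
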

		
		\begin{proof}
	In the remainder of the proof, we assume that the graph $G$ has no isolated vertex since an isolated vertex \emph{must} belong to any dominating set of $G$. We now prove that $(G,k,s,D)$ is always a \textsf{yes}-instance, i.e.\ there exists a dominating set of size at most $\tau$ that is reachable from $D$ under the TAR($k$) rule. 
		
	We associate to every vertex $v \in X\setminus D$ a \textit{special neighbor} among its neighbors that dominate it (which can be either in $X$ or $I$), i.e.\ we pick arbitrarily a vertex in $N_G[v] \cap D$. We denote this special neighbor $t(v)$. Let $T$ be the set of special neighbors, i.e.\ $T := \{ t(v)\ \vert\ v\in X\setminus D\}$. This corresponds to the set of vertices that are used to dominate the vertices in $X$ that do not belong to $D$. Note that $|T| \le \tau$.
		
    We are now able to describe the algorithm we use to output $D_\desire$, the target dominating set. It consists in exhaustively applying the two following rules on the vertices of $I$ that belong to the current dominating set:
		
	\begin{enumerate}[(i)]
		\item \label{case:vc-a} if there is a vertex $v$ in $I$ but not in $T$ that is already dominated by another vertex, then we remove $v$ from the dominating set; and
		\item \label{case:vc-b} if there is a vertex $v$ in $I$ but not in $T$ that is dominated only by itself, then we add any one of its neighbors $u \in X$ to the dominating set, and then remove $v$. The vertex $u$ does not need a special neighbor anymore, since it now belongs to the dominating set. We thus update the set $T$ by only keeping the special neighbors $t(w)$ of vertices $w$ that are still in $X \setminus D$.
	\end{enumerate}
		
	\begin{figure}[bt]
	    \centering
        \includegraphics[width=0.9\textwidth]{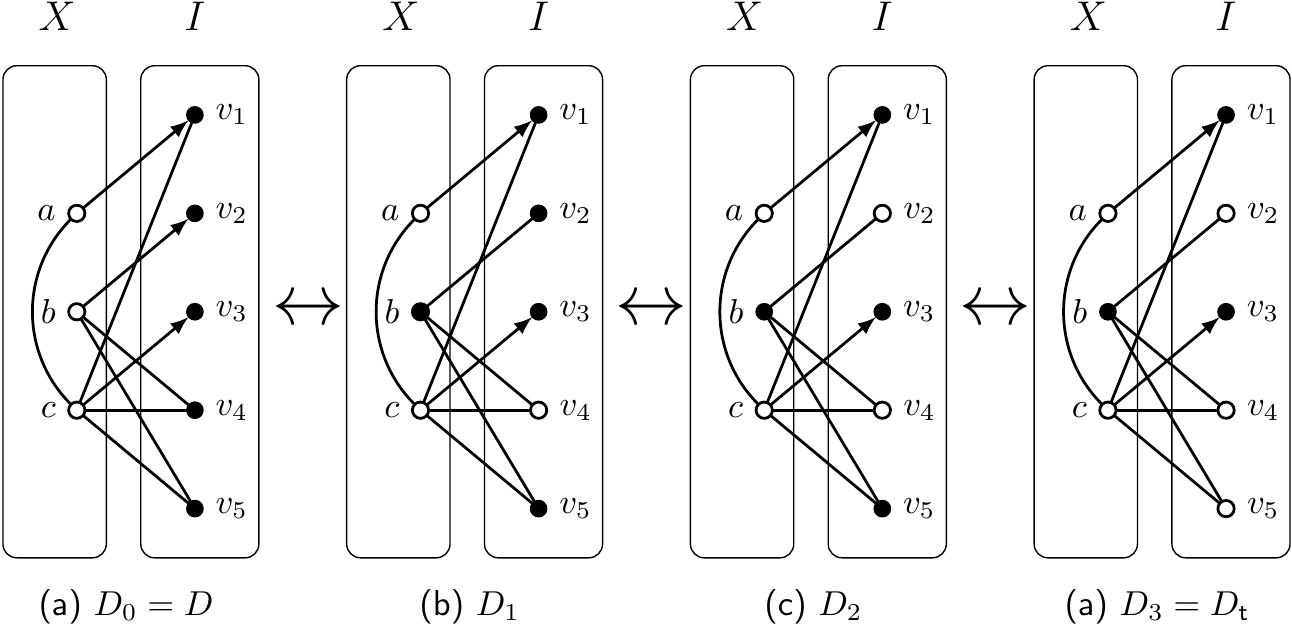}
		\caption{Reconfiguration sequence from the original dominating set $D = I$ to the target one $D_\desire = \set{b,v_1,v_3}$. $D_1$ is obtained from $D_0$ by applying Rule \ref{case:vc-b} and $D_2$ (resp.\ $D_3$) obtained from $D_1$ (resp.\ $D_2$) by applying Rule \ref{case:vc-a}. The special neighbor of a vertex $v \in X \setminus D$ is the one pointed by its outgoing edge.}
		\label{fig:fpt-vc}
	\end{figure}
		
	We first prove that these two rules are safe, i.e.\ we do not break the domination property at any step. Since Rule \ref{case:vc-a} removes a vertex $v$ that is not required to dominate itself or another vertex $u \in X$ (because it has not been chosen in $T$), we can safely remove it. In Rule \ref{case:vc-b}, after adding a neighbor of $v$ to the dominating set, $v$ is not required to dominate itself anymore. Since $v$ is not in $T$, we can now apply Rule \ref{case:vc-a} which is safe.
		
	Recall that $|D| < k$. Then, each dominating set obtained after applying one of these rules is of size at most $k$ since Rule \ref{case:vc-a} only removes vertices and Rule \ref{case:vc-b} consists in an addition immediately followed by a removal. 
		
	Now, let $D_\desire$ be the dominating set obtained once we cannot apply Rule \ref{case:vc-a} and Rule \ref{case:vc-b} anymore (see Figure \ref{fig:fpt-vc} for an example). All remaining vertices in $I \cap D_\desire$ now belong to $T$. By definition of $T$, each vertex in $X \setminus D_\desire$ has (exactly) one neighbor in $T$ (but they are not necessarily distinct). Therefore, $|I \cap D_\desire| \le |X \setminus D_\desire|$. As a result, $|D_\desire| = |X \cap D_\desire| + |I \cap D_\desire| \le |X \cap D_\desire| + |X \setminus D_\desire| = |X| = \tau$. Since $\tau < s$, the size of $D_\desire$ is at most $s$, as desired. 
    \end{proof}
	
	It remains to discuss the complexity of this algorithm. As we already said, we first compute a minimum vertex cover $X$ of $G$ in time FPT($\tau$). If $s \le \tau$, we run the FPT algorithm of Section \ref{subsection:d+s}. Otherwise, we first compute the set $T$ and then run the subroutine which are both described in the proof of Lemma \ref{lemma:FPTyes}. The two rules used in this subroutine only apply to vertices that belong to the set $I$ and whenever one is applied, exactly one vertex in $I$ is removed (and none is added). Hence, they are applied at most $|I \cap D|$ times. Therefore, the subroutine runs in polynomial time and produces the desired dominating set $D_\desire$. As a result, this algorithm is FPT with respect to $\tau$. This concludes the proof.		
	
\paragraph{Concluding remarks.} In this paper, we showed that {\sc OPT-DSR} is PSPACE-complete even if restricted to some graph classes. However, we only know that it is NP-hard for bounded maximum degree graphs or planar grapĥs, as an immediate corollary of Observation~\ref{obs:equivalent}. Hence, it would be interesting to determine whether {\sc OPT-DSR} is NP-complete or PSPACE-complete on these two graph classes. Note that the complexity on planar graphs remains open for {\sc OPT-ISR}.

We also proved that {\sc OPT-DSR} is W[2]-hard for parameter $k$ but the question remains as to whether there exists an XP algorithm for upper bound $k$.

\paragraph{Acknowledgements.}
We would like to thank the anonymous referees for several remarks which helped improve the presentation of this paper.

%
%
\bibliographystyle{abbrv}
\bibliography{references}

\end{document}